\theoremstyle{plain}
\newtheorem{thm}{Theorem}[section]
\newtheorem{lem}[thm]{Lemma}
\newtheorem*{cor}{Corollary}
\theoremstyle{definition}
\newtheorem{defn}{Definition}[section]
\theoremstyle{remark}
\newtheorem{rem}{Remark}
\newtheorem*{note}{Note}
\newcommand{\del}{\mathbf\Delta}
\newcommand{\Beta}{\boldsymbol\beta}
\newcommand{\ident}{\mathbf{I}}
\newcommand{\vecx}{\mathbf{x}}
\newcommand{\vecw}{\mathbf{w}}
\newcommand{\matd}{\mathbf{D}}
\newcommand{\mata}{\mathbf{A}}
\newcommand{\vecX}{\mathbf{X}}
\newcommand{\vecU}{\mathbf{U}}
\newcommand{\vecz}{\mathbf{z}}
\newcommand{\vect}{\mathbf{t}}
\newcommand{\vecpi}{\mbox{\boldmath$\pi$}}
\newcommand{\vectheta}{\mbox{\boldmath$\theta$}}
\newcommand{\vecgamma}{\mbox{\boldmath$\gamma$}}
\newcommand{\varthet}{\mbox{\boldmath$\vartheta$}}
\newcommand{\vecmu}{\mbox{\boldmath$\mu$}}
\newcommand{\vecbeta}{\mbox{\boldmath$\beta$}}
\newcommand{\vecalpha}{\mbox{\boldmath$\alpha$}}
\newcommand{\mSigma}{\mbox{\boldmath$\Sigma$}}
\newcommand{\matsig}{\mbox{\boldmath$\Sigma$}}
\newcommand{\mDelta}{\mbox{\boldmath$\Delta$}}
\newcommand{\mPsi}{\mbox{\boldmath$\Psi$}}
\date{}
\begin{document}
\title{A Mixture of Generalized Hyperbolic Distributions}
\author{Ryan P.\ Browne and Paul D.\ McNicholas}
\date{\small Dept.\ of Mathematics \& Statistics, McMaster University, Hamilton, Ontario, Canada.}

\maketitle

\begin{abstract} 
We introduce a mixture of generalized hyperbolic distributions as an alternative to the ubiquitous mixture of Gaussian distributions as well as their near relatives of which the mixture of multivariate $t$ and skew-$t$ distributions are predominant. The mathematical development of our mixture of generalized hyperbolic distributions model relies on its relationship with the generalized inverse Gaussian distribution. The latter is reviewed before our mixture models are presented along with details of the aforesaid reliance. Parameter estimation is outlined within the expectation-maximization framework before the clustering performance of our mixture models is illustrated via applications on simulated and real data. In particular, the ability of our models to recover parameters for data from underlying Gaussian and skew-$t$ distributions is demonstrated. Finally, the role of Generalized hyperbolic mixtures within the wider model-based clustering, classification, and density estimation literature is discussed.\\[+8pt]
\textbf{Keywords}: {Clustering; Generalized hyperbolic distribution; Generalized inverse Gaussian distribution; Mixture models.}
\end{abstract}

\section{Introduction} 
Finite mixture models are based on the underlying assumption that a population is a convex combination of a finite number of densities. They therefore lend themselves quite naturally to classification and clustering problems. Formally, a random vector $\mathbf{X}$ arises from a parametric finite mixture distribution if, for all $\vecx \subset \mathbf{X}$, its density can be written $$f(\vecx\mid \mPsi)= \sum_{g=1}^G \pi_g f_g(\vecx\mid\vectheta_g),$$ where $\pi_g >0$, such that $\sum_{g=1}^G \pi_g = 1$, are the mixing proportions, $f_1(\vecx\mid\vectheta_g),\ldots,f_G(\vecx\mid\vectheta_g)$ are called component densities, and $\mPsi=(\vecpi,\vectheta_1,\ldots,\vectheta_G)$ is the vector of parameters with $\vecpi=(\pi_1,\ldots,\pi_G)$. The component densities $f_1(\vecx\mid\vectheta_1),\ldots,f_G(\vecx\mid\vectheta_G)$ are usually taken to be of the same type, most commonly multivariate Gaussian. The popularity of the multivariate Gaussian distribution is due to its mathematical tractability and flexibility for density estimation. In the event that the component densities are multivariate Gaussian, the density of the mixture model is $f(\vecx\mid\mPsi)= \sum_{g=1}^G \pi_g\phi(\vecx\mid\vecmu_g,\matsig_g),$ where $\phi(\vecx\mid\vecmu_g,\matsig_g)$ is the multivariate Gaussian density with mean~$\vecmu_g$ and covariance matrix $\matsig_g$. The idiom `model-based clustering' is used to connote clustering using mixture models. Model-based classification \citep[e.g.,][]{dean06,mcnicholas10c}, or partial classification \citep[cf.][Section~2.7]{mclachlan92}, can be regarded as a semi-supervised version of model-based clustering, while model-based discriminant analysis is supervised \citep[cf.][]{hastie96}.

The recent burgeoning of non-Gaussian approaches to model-based clustering includes work on the multivariate $t$-distribution \citep{peel00,andrews11b,andrews12,lin14}, the skew-normal distribution \citep{lin09}, the skew-$t$ distribution \citep{lin10,vrbik12,vrbik14,lee14,murray14a,murray14b}, the variance-gamma distribution \citep{smcnicholas13}, as well as other approaches \citep{karlis07,handcock07,browne11,franczak14}. In this paper, we add to the richness of the pallet of non-Gaussian mixture model-based approaches to clustering and classification by introducing a mixture of generalized hyperbolic distributions, which contains the aforementioned models as special or limiting cases. Crucially, each component has an index parameter, which is free to vary and engenders a flexibility not present in other non-elliptical mixture approaches to clustering (see Section~\ref{sec:method} for further details).

In Section~\ref{sec:method}, our methodology is developed drawing on connections with the generalized inverse Gaussian distribution. Parameter estimation is described (Section~\ref{sec:para}) before both simulated and real data analyses are used to illustrate our approach (Section~\ref{sec:data}). The paper concludes with a summary and suggestions for future work in Section~\ref{sec:conc}.

\section{Methodology}\label{sec:method}
\subsection{Generalized Inverse Gaussian Distribution}
The generalized inverse Gaussian (GIG) distribution was introduced by \cite{good53} and its statistical properties were laid down by \cite{barndorff77}, \cite{blaesild78}, \cite{halgreen79}, and \cite{jorgensen82}. Write $W\backsim\text{GIG}(\psi,\chi,\lambda)$ to indicate that the random variable~$W$ follows a generalized inverse Gaussian (GIG) distribution with parameters $(\psi,\chi,\lambda)$ and density
\begin{equation}\label{gig}
p(w\mid\psi,\chi,\lambda) = \frac{\left({\psi}/{\chi}\right)^{\lambda/2}y^{\lambda-1}}{2K_\lambda\left( \sqrt{\psi \chi}\right)}\exp\left\{-\frac{\psi y+\chi/y}{2}\right\},
\vspace{-0.08in}
\end{equation}
for $w>0$, where $\psi,\chi\in\mathbb{R}^+$, $\lambda\in\mathbb{R}$, and $K_{\lambda}$ is the modified Bessel function of the third kind with index~$\lambda$. There are several special cases of the GIG distribution, such as the gamma distribution ($\chi=0$, $\lambda>0$) and the inverse Gaussian distribution ($\lambda=-1/2$). Herein, we write $W\backsim\text{GIG}(\psi,\chi,\lambda)$ to indicate that a random variable $W$ has the GIG density as parameterized in \eqref{gig}.

Setting $\chi =  \omega \eta$ and $\psi = {\omega}/{\eta}$ or $\omega = \sqrt{\psi \chi}$ and $\eta = \sqrt{{\chi}/{\psi}}$, we obtain a different but for our purposes, more meaningful parameterization of the GIG density,
\begin{equation} \label{GIG eol}
h(w\mid\omega,\eta,\lambda) = \frac{(y/\eta)^{\lambda-1}}{2\eta K_\lambda\left( \omega \right)}\exp\left\{-\frac{\omega}{2}\left(\frac{y}{\eta}+ \frac{\eta}{y}\right)\right\},
\end{equation}
where $\eta>0$ is a scale parameter, $\omega>0$ is a concentration parameter, and $\lambda$ is an index parameter. Herein, we write $W\backsim\mathcal{I}(\omega,\eta,\lambda)$ to indicate that a random variable $W$ has the GIG density as parameterized in \eqref{GIG eol}.
The GIG distribution has some attractive properties including the tractability of the following expected values:
\begin{equation} \label{GIG scale}
\begin{split}
&\mathbb{E}\left[W\right] = \eta \frac{K_{\lambda+1}\left( \omega \right) }{ K_\lambda\left( \omega\right)},\qquad
\mathbb{E}\left[{1}/{W} \right] = \frac{1}{\eta} \frac{K_{\lambda-1}\left( \omega \right) }{ K_\lambda\left( \omega\right)} = \frac{1}{\eta}\frac{K_{\lambda+1}\left( \omega \right) }{ K_\lambda\left( \omega\right)}  - \frac{2 \lambda}{\omega \eta},\\
&\mathbb{E}\left[\log W\right] = \log \eta + \frac{\partial }{\partial v} \log K_{\lambda}\left(\omega\right) = \log \eta + \frac{1}{ K_{\lambda}\left( \omega\right)}\frac{\partial }{\partial v}   K_{\lambda}\left( \omega\right).
\end{split}\end{equation}
These expected values are necessary ingredients in the expectation-maximization algorithm that is presented in Section~\ref{sec:para}.    

\subsection{Generalized Hyperbolic Distribution}\label{sec:ghd}
\cite{mcneil2005} give the density of a random variable $\vecX$ following the generalized hyperbolic distribution,
\begin{equation}\begin{split} \label{dist ghy1}
f(\vecx\mid\varthet) &= 
\left[ \frac{ \chi + \delta\left(\vecx, \vecmu\mid\del\right) }{ \psi+ \vecalpha'\del^{-1}\vecalpha} \right]^{(\lambda-{p}/{2})/2}\frac{[{\psi}/{\chi}]^{{\lambda}/{2}}K_{\lambda - {p}/{2}}\left(\sqrt{[\psi+ \vecalpha'\del^{-1}\vecalpha] [ \chi +\delta(\vecx, \vecmu\mid\del)]}\right)}{ \left(2\pi\right)^{{p}/{2}} \left|\del\right|^{{1}/{2}} K_{\lambda}\left( \sqrt{\chi\psi}\right)\exp\left\{ \left( \vecmu- \vecx\right)'\del^{-1}\vecalpha\right\}},
\end{split}\end{equation}
where $\delta\left(\vecx, \vecmu\mid\del\right) = \left(\vecx - \vecmu\right)'\del^{-1}\left(\vecx - \vecmu\right)$  is the squared Mahalanobis distance between $\vecx$ and $\vecmu$, and $\varthet=\left(\lambda, \chi, \psi, \vecmu, \mDelta, \vecalpha\right)$ is the vector of parameters. Herein, we use the notation $\vecX\backsim\mathcal{G}_p\left(\lambda, \chi, \psi, \vecmu, \del, \alpha\right)$ to indicate that a $p$-dimensional random variable $\vecX$ has the generalized hyperbolic density in \eqref{dist ghy1}. Note that we use $\del$ to denote the scale because, in this parameterization, we need to hold $|\del|=1$ to ensure idenitifiability (cf.\ Sections~\ref{sec:com} and~\ref{sec:fmi}). 

A generalized hyperbolic random variable $\vecX$ can be generated by combining a random variable $W\backsim\text{GIG}(\psi,\chi,\lambda)$ and a latent multivariate Gaussian random variable $\vecU\backsim\mathcal{N}(\mathbf{0},\del)$ using the relationship
\begin{equation} \label{dist of X given w}
\vecX = \vecmu+ W \vecalpha + \sqrt{W} \vecU,
\end{equation} 
and it follows that $\vecX\mid w \backsim \mathcal{N}(\vecmu+w\vecalpha,w\del)$.
Therefore, from Bayes' theorem,
\begin{equation*}\begin{split}
&f(w\mid\vecx)=\frac{f(\vecx\mid w)h(w)}{f(\vecx)}=\\
&\left[\frac{\psi+\vecalpha'\del^{-1}\vecalpha}{\chi+\delta\left(\vecx,\vecmu\mid\del\right)}\right]^{(\lambda - {p}/{2})/2}
\frac{w^{\lambda-{p}/{2}-1}\exp\left\{-\left[w\left( \psi+ \vecalpha'\del^{-1}\vecalpha\right) + 
\left(\chi+\delta\left(\vecx,\vecmu\mid\del\right)\right)/w\right]/2\right\}}{2K_{\lambda-{p}/{2}} \left( \sqrt{ \left[ \psi+ \vecalpha'\del^{-1}\vecalpha\right] \left[ \chi +\delta\left(\vecx,\vecmu\mid\del\right)\right]}\right)},
\end{split}\end{equation*}
and so we have
$W\mid\vecx\backsim\text{GIG}(\psi+\vecalpha'\del^{-1}\vecalpha, \chi+\delta\left(\vecx,\vecmu\mid\del\right), \lambda-{p}/{2})$.

\cite{mcneil2005} describe a variety of limiting cases for the generalized hyperbolic distribution.
For $\lambda =1$, we obtain the multivariate generalized hyperbolic distribution such that its univariate margins are one-dimensional hyperbolic distributions, for $\lambda = (p+1)/2$, we obtain the $p$-dimensional hyperbolic distribution, and for $\lambda = -{1}/{2}$, we obtain the inverse Gaussian distribution. 
If $\lambda >0$ and $\chi \rightarrow 0$, we obtain a limiting case of the distribution known as the generalized, Bessel function or variance-gamma distribution \citep{barndorff78}. If $\lambda =1 $, $\psi= 2$ and $\chi\rightarrow 0$, then we obtain the asymmetric Laplace distribution \citep[cf.][]{kotz2001} and if $\vecalpha = \mathbf{0}$, we have the symmetric generalized hyperbolic distribution \citep{barndorff78}. Other special and limiting cases include the multivariate normal-inverse Gaussian (MNIG) distribution \citep{karlis07}, the skew-$t$ distribution as well as the multivariate $t$, skew-normal, and Gaussian distributions. 

Suppose we relax the condition that $|\del|=1$, in which case we use $\matsig$ to denote the scale matrix. An identifiability issue arises because the density of $\vecX_1\backsim\mathcal{G}_p(\lambda, \chi/c, c \psi, \vecmu, c \matsig, c \vecalpha) $ and $\vecX_2\backsim\mathcal{G}_p(\lambda, \chi, \psi, \vecmu, \matsig, \alpha)$ is identical for any $c\in\mathbb{R}^+$.  Using $\del$, with $|\del| =1$, instead of $\matsig$, solves this problem but would be prohibitively restrictive for model-based clustering and classification applications. 
An alternative approach is to use the relationship in (\ref{dist of X given w}) to set the scale parameter $\eta=1$. 
This relationship is equivalent to 
$$\vecX = \vecmu + W\eta  \vecalpha + \sqrt{W\eta} \mathbf{U} = \vecmu + W\vecbeta+ \sqrt{W} \mathbf{U},$$
where $\vecbeta= \eta \vecalpha$, 
$W\backsim\mathcal{I}(\omega,1, \lambda)$ and $\vecU\backsim\mathcal{N}(\mathbf{0}, \matsig)$. 
Under this parameterization, the density of the generalized hyperbolic distribution is 
\begin{equation} \label{dist ghy}
f(\vecx\mid\vectheta ) = 
\left[ \frac{ \omega + \delta\left(\vecx, \vecmu| \mSigma\right) }{ \omega+ \vecbeta'\mSigma^{-1}\vecbeta} \right]^{(\lambda-{p}/{2})/2}
\frac{  K_{\lambda - {p}/{2}}\Big(\sqrt{\big[ \omega+ \vecbeta'\mSigma^{-1}\vecbeta\big]\big[\omega +\delta\left(\vecx, \vecmu| \mSigma\right)\big]}\Big)}{ \left(2\pi\right)^{{p}/{2}} \left| \mSigma \right|^{{1}/{2}} K_{\lambda}\left( \omega\right)\exp\big\{-\left(\vecx-\vecmu\right)'\mSigma^{-1}\vecbeta\big\}},
\end{equation}
and $W\mid\vecx\backsim\mathcal{I}(\omega+ \vecbeta'\mSigma^{-1}\vecbeta,\omega+\delta\left(\vecx, \vecmu| \mSigma\right),\lambda - {p}/{2})$. We use $\mathcal{G}_p^*(\lambda,\omega,\vecmu,\mSigma,\vecbeta)$ to denote the density in \eqref{dist ghy} and we use this parameterization when we describe parameter estimation (Section~\ref{sec:para}). 

\subsection{Comments}\label{sec:com}

The presence of the index parameter $\lambda$ in the generalized hyperbolic density engenders a flexibility that is not found in its special and limiting cases. As an illustration of this point, consider the log-densities for three of these special and limiting cases as well as the generalized hyperbolic distributions for two different values of $\lambda$. Looking at Figure~\ref{fig:logd}, it is striking that the Gaussian, variance-gamma, and $t$-distributions have relatively similar log-densities, whereas the generalized hyperbolic distributions have a markedly different log-density. This illustrates the extra modelling flexibility induced by the index parameter~$\lambda$. Furthermore, with appropriate parameterizations, the generalized hyperbolic distribution can replicate the log-density of any of the other distributions in Figure~\ref{fig:logd}. Illustrations of the generalized hyperbolic distribution capturing special and limiting cases are presented in Section~\ref{sec:pararecovery}. Note that the same location and scale parameters were used for each distribution, and, where relevant, skewness was set to $\mathbf{0}$.
\begin{figure}[h!]
\vspace{-0.35in}
\centering
\includegraphics[width=3.5in]{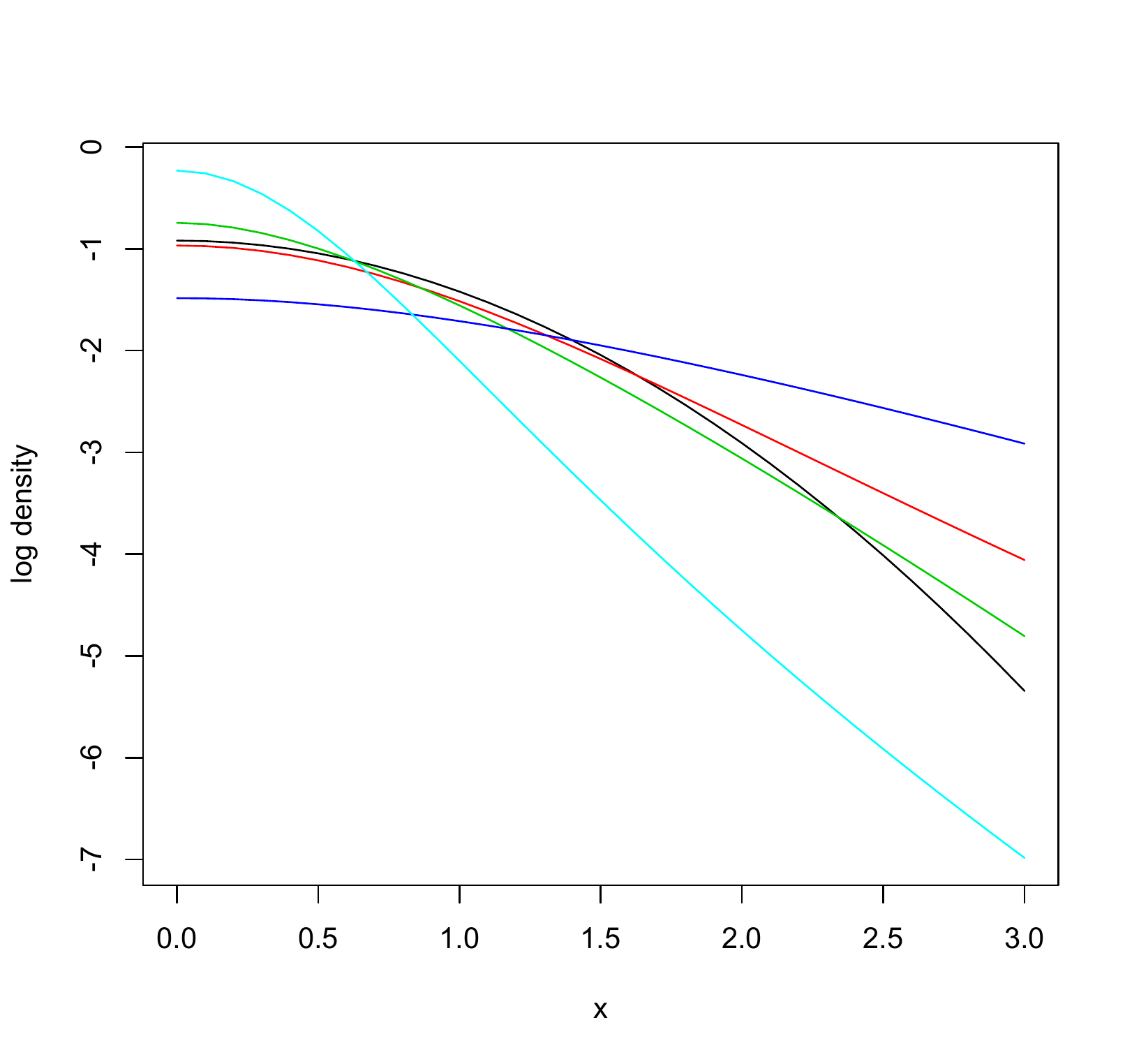}
\vspace{-0.2in}
\caption{Log-density plots for the Gaussian distribution (black), $t$-distribution with five degrees of freedom (red), variance-gamma distribution with five degrees of freedom (green), and the generalized hyperbolic distribution with $\lambda=2$ (blue) and $\lambda=-2$ (turquoise), respectively.}\label{fig:logd}
\end{figure}

\subsection{Identifiability}\label{sec:fmi}

\cite{holzmann2006} prove identifiability of finite mixtures of elliptical distributions. They state that ``finite mixtures are said to be identifiable if distinct mixing distributions with finite support correspond to distinct mixtures''. We will use their work as the basis of a proof that a mixture of generalized hyperbolic distributions is identifiable.

\begin{defn}\label{def:1}
In the present context, finite mixtures from the normal mean-variance (scatter) family ${f_{\alpha_g, p}(x)  : \theta_g =(\theta,\mu, \beta,\sigma) \in  \mathcal{A}^p}$, where $\mathcal{A}^p \subset \mathbb{R}^{k\times p^2 \times p(p+1)/2}$, are identifiable if the following equation is satisfied:
\begin{equation} \label{def identifiable}
\sum_{g=1}^G \pi_g f_{\alpha_g, p}(x) = \sum_{g=1}^G \pi_g' f_{\alpha_g', p}(x),
\end{equation}
$x \in \mathbb{R}^p$, where $G$ is a positive integer, $\sum_{g=1}^G \pi_g = \sum_{g=1}^G \pi_g' = 1$ and $\pi_g, \pi_g' > 0$ for $g=1,\ldots,G$, implies that there exists a permutation $\sigma \in S_m$ such that $(\pi_g,\alpha_g)=(\pi_{\sigma(g)}, \alpha_{\sigma(g)})$ for all $g$. 
\end{defn}

\begin{rem}
Evidently, finite mixtures are identifiable if the family ${f_{\alpha_g, p}(x)  : \theta_g =(\theta,\mu, \beta,\sigma) \in  \mathcal{A}^p}$ is linearly independent. The founding work on finite mixture identifiability is by \cite{yakowitz1968}, who state that this linear independence is a necessary and sufficient condition for identifiability. 
\end{rem}

\begin{defn}Two sets of distributions $\mathcal{G} $ and  $\mathcal{F} $ are disjoint if $\mathcal{G}  \cap\mathcal{F} = \varnothing$ or, equivalently, if no element of $\mathcal{G} $ can be formed by  a linear combination of elements of $\mathcal{F}$, or no element of $\mathcal{F} $ can be formed by  a linear combination of elements of $\mathcal{G}$. That is, the span of $\mathcal{G}$ and  $\mathcal{F}$ is disjoint.\end{defn}

\begin{rem}
If $\mathcal{G}  \cap\mathcal{F} \neq \varnothing $ then there exist some finite $\xi_i$'s or $\tau_j$'s such that: 
\begin{equation}
f(x) = \sum_{i=1}^l \xi_i g_i(x) 
\;\;\;\;\;
\mbox{or}
\;\;\;\;\;
g(x) = \sum_{j=1}^m \tau_j f_j(x),
\end{equation}
where $g \in \mathcal{G}$ and $f \in \mathcal{F}$ 
\end{rem}

\begin{defn}Two sets of distributions $\mathcal{G} $ and  $\mathcal{F} $ are identifiably disjoint if each set is identifiable and the two sets are disjoint.\end{defn}

\begin{defn}A collection of $k$ identifiable sets  of distributions $\left\{ \mathcal{G}_1, \ldots,  \mathcal{G}_k \right\}$ are identifiably mutually disjoint if each set  $\mathcal{G}_i$ is identifiable and each pair $(\mathcal{G}_i, \mathcal{G}_j)$ is disjoint, i.e,  $\mathcal{G}_i \cap \mathcal{G}_j = \varnothing$ for  $i\neq j \in \left\{1,\ldots,k\right\}$.\end{defn}

\begin{lem} If two identifiable sets of distributions  $\mathcal{G} $ and  $\mathcal{F} $ are disjoint then the set $\mathcal{G}  \cup \mathcal{F} $ is identifiable.\end{lem}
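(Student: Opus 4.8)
The plan is to reduce identifiability of $\mathcal{G} \cup \mathcal{F}$ to a statement about linear independence of densities. By the Remark following Definition~\ref{def:1}, i.e.\ the result of \cite{yakowitz1968}, a family of densities yields identifiable finite mixtures precisely when those densities are linearly independent. I would therefore read the disjointness hypothesis in its strong form, namely $\mathrm{span}(\mathcal{G}) \cap \mathrm{span}(\mathcal{F}) = \{0\}$ (this is what the phrase ``the span of $\mathcal{G}$ and $\mathcal{F}$ is disjoint'' is intended to encode), and then prove that $\mathcal{G} \cup \mathcal{F}$ is itself linearly independent.

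First I would take an arbitrary finite linear combination $\sum_i c_i g_i + \sum_j d_j f_j = 0$ with the $g_i \in \mathcal{G}$ distinct and the $f_j \in \mathcal{F}$ distinct, and rewrite it as $\sum_i c_i g_i = -\sum_j d_j f_j$. The left-hand side lies in $\mathrm{span}(\mathcal{G})$ and the right-hand side in $\mathrm{span}(\mathcal{F})$, so their common value lies in $\mathrm{span}(\mathcal{G}) \cap \mathrm{span}(\mathcal{F}) = \{0\}$ by disjointness. Hence $\sum_i c_i g_i = 0$ and $\sum_j d_j f_j = 0$ separately. Since $\mathcal{G}$ is identifiable it is linearly independent, forcing every $c_i = 0$; likewise every $d_j = 0$ by linear independence of $\mathcal{F}$. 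Thus $\mathcal{G} \cup \mathcal{F}$ is linearly independent, and a final appeal to the Remark yields identifiability of $\mathcal{G} \cup \mathcal{F}$.

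An equivalent route that stays closer to Definition~\ref{def:1} would start from two equal finite mixtures over $\mathcal{G} \cup \mathcal{F}$, collect the $\mathcal{G}$-terms and $\mathcal{F}$-terms on each side, and move all $\mathcal{G}$-terms to the left and all $\mathcal{F}$-terms to the right. Disjointness of the spans again forces each side to vanish; integrating over $\mathbb{R}^p$ (each density integrating to $1$) then shows that the total $\mathcal{G}$-weight and the total $\mathcal{F}$-weight agree across the two representations, so after rescaling by these totals one obtains genuine mixtures within $\mathcal{G}$ and within $\mathcal{F}$ to which the individual identifiability hypotheses apply, producing a matching permutation on each block and hence on the union.

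I expect the main obstacle to be correctly reading the disjointness hypothesis rather than any hard computation: the pointwise statement ``no element of $\mathcal{G}$ is a linear combination of elements of $\mathcal{F}$'' is strictly weaker than $\mathrm{span}(\mathcal{G}) \cap \mathrm{span}(\mathcal{F}) = \{0\}$, since a nontrivial relation such as $g_1 + g_2 = f_1 + f_2$ can hold without any single density lying in the other span, and the lemma is only true under the stronger reading. A secondary technical point, relevant only to the second route, is handling the normalization of the block weights and the degenerate case in which one of $\mathcal{G}$, $\mathcal{F}$ contributes no component at all.
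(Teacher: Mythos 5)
Your first argument is correct and is essentially the paper's own proof: the paper likewise splits the vanishing linear combination $\sum_i \xi_i g_i + \sum_j \tau_j f_j = 0$ into its $\mathcal{G}$-part and $\mathcal{F}$-part, invokes span-disjointness to force each part to vanish, and then uses the identifiability (linear independence, via \cite{yakowitz1968}) of $\mathcal{G}$ and $\mathcal{F}$ separately. Your direct formulation, with the explicit strong reading $\mathrm{span}(\mathcal{G}) \cap \mathrm{span}(\mathcal{F}) = \{0\}$, in fact tightens the paper's contradiction argument, which passes from ``some linear combination of the $g_i$ lies in the span of $\mathcal{F}$'' to a contradiction without noting that this requires the span-level, not element-level, notion of disjointness.
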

\begin{proof} Assume the following relation holds
\begin{equation} \label{union identifiable1}
\sum_{i=1}^l \xi_i g_i(x) + \sum_{j=1}^m \tau_j f_j(x) = 0.
\end{equation}
Both $l \ge 1$ and $m \ge1$ because otherwise we would have $\sum_{i=1}^l \xi_i g_i(x)  = 0$  or $\sum_{j=1}^m \tau_j f_j(x) = 0$, which is a contraction because $\mathcal{G} $ and  $\mathcal{F} $ are identifiable. \eqref{union identifiable1} implies that a relation such as
\begin{equation} \label{union identifiable}
\sum_{i=1}^l \xi_i g_i(x) = - \sum_{j=1}^m \tau_j f_j(x)
\end{equation}
holds and this implies that some linear combination of $g_i(x)$'s lies in the span of $\mathcal{F}$, which is a contraction because $\mathcal{G}  \cap\mathcal{F} = \varnothing $.\end{proof}

\begin{rem}The two sets could be two different distributions, e.g., Cauchy and normal.\end{rem}

\begin{rem}The two sets could divide the parameter space:  $\mathcal{G} =  \left\{ h(x|\theta); \theta \in \Theta_\mathcal{G}\right\}$  and $\mathcal{F} =  \left\{ h(x|\theta); \theta \in \Theta_\mathcal{F} \right\}$. If this forms a partition, i.e., if the complete parameter space is $\Theta =   \Theta_\mathcal{F}  \cup  \Theta_\mathcal{G}$, then we can show identifiability for the complete parameter space by showing identifiability of disjoint pieces of the parameter space by using the standard theorems from \cite{yakowitz1968}, \cite{kent1983}, \cite{holzmann2006}, and \cite{atienza2006}. 
\end{rem}

\begin{rem} This can be generalized to $k$ sets. If a collection of $k$ identifiable sets  $\left\{ \mathcal{G}_1, \ldots,  \mathcal{G}_k \right\}$ are mutually disjoint then the set $\mathcal{F}  =  \cup_{i=1}^k \mathcal{G}_i $ is identifiable.\end{rem}

\begin{rem}This can be further extended if the $k$ sets are indexed by a parameter set, say $\gamma$. Specifically, given a collection of $k$ identifiable sets  $\left\{ \mathcal{G}_{\gamma_1}, \ldots,  \mathcal{G}_{\gamma_k} \right\}$ with index parameter $\gamma$, the set $\mathcal{F}  =  \cup_{i=1}^k \mathcal{G}_{\gamma_i} $ is identifiable if $\mathcal{G}_{\gamma_i}  \cap \mathcal{G}_{\gamma_j} = \varnothing$ for $i\neq j\in \left\{1,\ldots,k\right\}$ and $\gamma \in \Theta$.\end{rem}

\begin{note}In addition, if each $\gamma_i \in \Theta_\gamma $, we can establish the identifiability of $\gamma$ by showing the identifiability of $\sum_{i=1}^k \pi_i g(x | \gamma_i)$, where $\gamma_i \neq \gamma_j$ for $i,j \in \left\{1,\ldots,k\right\}$, using the identifiability theorems.\end{note}

\begin{rem} There is no need to extend these notions of identifiability from finite additivity to $\sigma$-additivity because we are dealing with finite mixtures and, therefore, all subsets will be finite. However, this might be of interest in future research.\end{rem}

\begin{lem}The union, $\mathcal{F}$, of identifiable and mutually disjoint distribution sets $$\left\{ \mathcal{G}_{\gamma,\eta_{\gamma} } | \gamma \in \Theta, \eta_{\gamma} \in \Omega_{\gamma} \right\}$$ is identifiable with respect to $\gamma$ if there is exists a total ordering $\preceq$ on $\gamma \in \Theta$.\end{lem}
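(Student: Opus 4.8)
The plan is to reduce identifiability of the (possibly infinite) union $\mathcal{F}$ to the finite-union results already in hand, using the total ordering $\preceq$ to peel off the contribution of one value of $\gamma$ at a time. Suppose two finite mixtures drawn from $\mathcal{F}$ coincide,
\begin{equation*}
\sum_{g=1}^G \pi_g f_g(x) = \sum_{h=1}^{G'} \pi_h' f_h'(x) \qquad \text{for all } x\in\mathbb{R}^p,
\end{equation*}
where $f_g \in \mathcal{G}_{\gamma_g,\eta_g}$ and $f_h' \in \mathcal{G}_{\gamma_h',\eta_h'}$. Because each mixture is finite, only finitely many distinct values of $\gamma$ occur among the components of the two sides; by the total ordering we may list them as $\gamma_{(1)} \prec \gamma_{(2)} \prec \cdots \prec \gamma_{(K)}$. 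Collecting, for each level $\gamma_{(k)}$, the signed combination of those components on the two sides carrying that value of $\gamma$, and writing it as $\Phi_k(x)$, the displayed equality becomes $\sum_{k=1}^K \Phi_k(x) = 0$, where each $\Phi_k$ is a finite signed combination of densities all lying in $\{\mathcal{G}_{\gamma_{(k)},\eta} : \eta\in\Omega_{\gamma_{(k)}}\}$.

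First I would isolate the $\preceq$-extreme level. This is where the total ordering does its work, and where I would follow the transform-based technique of \cite{atienza2006} and \cite{kent1983}: the order $\preceq$ on $\Theta$ is to be read as the order in which a suitable transform of the densities (a moment-, Laplace-, or characteristic-type transform) vanishes at a boundary point $t_0$ of its common domain, so that for $\gamma_{(i)}\prec\gamma_{(j)}$ the transform at level $\gamma_{(j)}$ is asymptotically negligible relative to that at level $\gamma_{(i)}$ as $t\to t_0$. Applying the transform to $\sum_{k}\Phi_k = 0$, dividing by the dominant transform, and letting $t\to t_0$ forces the extreme level's net combination to vanish by itself, $\Phi_1 \equiv 0$.

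Next I would invoke within-level identifiability. The combination $\Phi_1 \equiv 0$ involves only densities in $\{\mathcal{G}_{\gamma_{(1)},\eta}:\eta\in\Omega_{\gamma_{(1)}}\}$; by hypothesis each $\mathcal{G}_{\gamma,\eta_\gamma}$ is identifiable and distinct sets in the collection are mutually disjoint, so the finite-union Lemma above and its $k$-set extension show that this fixed-$\gamma$ family is identifiable. Hence the level-$\gamma_{(1)}$ sub-mixtures of the two sides must agree in their weights and in their $\eta$-parameters, matched by a permutation. Deleting these matched terms leaves an equality among the remaining $K-1$ levels, and I would repeat the dominance argument. Since $\Theta$ is totally ordered and only finitely many levels appear, the process terminates, matching every $\gamma_{(k)}$ together with its within-level parameters and weights; this is exactly identifiability of $\mathcal{F}$ with respect to $\gamma$.

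The step I expect to be the main obstacle is the dominance argument itself. A bare total order on $\Theta$ carries no analytic content; what the proof really requires is that $\preceq$ be compatible with an asymptotic domination of the associated densities, in the sense of the hypotheses of \cite{atienza2006}. The substantive task is therefore to exhibit a transform and a boundary point $t_0$ at which the order of vanishing is monotone in $\gamma$; in the generalized hyperbolic setting this amounts to controlling the relevant Bessel-function asymptotics in the index parameter. Once that domination is in place, the peeling and within-level steps are routine consequences of the lemmas already established.
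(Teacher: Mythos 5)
Your overall scaffolding---observing that any finite mixture relation involves only finitely many $\gamma$-levels, grouping the signed combination level-by-level into $\sum_{k}\Phi_k(x)=0$, invoking within-level identifiability, and inducting---coincides with the paper's proof. The genuine gap is in your key step: forcing $\Phi_1\equiv 0$ by a transform-dominance argument. As you yourself concede, a bare total ordering on $\Theta$ carries no analytic content, so the domination you need (the transform at level $\gamma_{(j)}$ asymptotically negligible relative to that at level $\gamma_{(i)}$ whenever $\gamma_{(i)}\prec\gamma_{(j)}$) is not derivable from the stated hypotheses. What you have outlined is therefore a proof of a different lemma, one whose hypotheses additionally require the ordering to be compatible with an asymptotic domination of transforms; as a proof of the lemma as stated, it fails at exactly the step you flag as the main obstacle.

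The resolution is not to strengthen the ordering but to notice that the mutual-disjointness hypothesis already does the peeling, and does it for all levels simultaneously. The paper's proof is purely algebraic: only finitely many levels $\gamma_1,\dots,\gamma_k$ occur in the relation; by hypothesis the sets $\mathcal{G}_{\gamma_1},\dots,\mathcal{G}_{\gamma_k}$ are identifiable and mutually disjoint, so by the earlier lemma on unions of disjoint identifiable sets (and its $k$-set extension) their union is identifiable, i.e., linearly independent in the sense of Yakowitz and Spragins. Linear independence forces every level-wise combination $\Phi_k$ to vanish at once---no extreme level, no transform, no limit. Within each level, Definition~\ref{def:1} then produces a permutation matching the $\eta$-parameters and weights, and the total ordering enters only as bookkeeping: it lines up the levels identically on both sides of the equation so that the permutation on $\gamma$ can be read off. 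The analytic content you were reaching for is real, but it lives where the paper puts it: in verifying this lemma's hypotheses for the generalized hyperbolic family (Parts~II and~III of the proof of Theorem~\ref{thm:big_one}, where characteristic-function decay gives within-level identifiability and tail behaviour gives disjointness of the sets $\mathcal{G}_{(\alpha,\delta)}$), not inside the abstract lemma itself.
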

\begin{proof}$\mathcal{F}$ is identifiable because, for any finite mixture, there can at most be a finite number, say $k$, of $ \mathcal{G}_\gamma $, labeled as $\gamma_1, \ldots, \gamma_k$, and the set  $\left\{\mathcal{G}_{\gamma_1}, \dots, \mathcal{G}_{\gamma_k}\right\}$ is identifiable and mutually disjoint. 
From the definition of identifiability, i.e., Definition~\ref{def:1}, there exists a relation such that
\begin{equation}
\sum_{i=1}^N \tau_{i} g_{\gamma_{i},  \eta_{\gamma_i} }(x) 
= \sum_{i=1}^N \tau_{i}' g_{\gamma_{i}', \eta_{\gamma_i}'}(x),
\end{equation}
where each $g_{i }(x) \in \mathcal{F}$ and $(\gamma_{i},  \eta_{\gamma_i})$ are the parameters of $\mathcal{F}$. Now, we can order this summation using the total ordering on $\gamma$ and assuming $\gamma_1 \preceq \dots \preceq \gamma_k$, to obtain
\begin{equation}
\sum_{i=1}^k \sum_{j=1}^{m_i} \tau_{ij} g_{\gamma_{i} \theta_{ij} }(x) 
= \sum_{i=1}^{k} \sum_{j=1}^{m_i} \tau_{ij}' g_{\gamma_{i}', \theta_{ij}' }(x)
\end{equation}
for each inner summation. From Definition~\ref{def:1}, there exists a permutation such that $\theta_{ij} = \theta_{ij}'$ and, therefore, there exists a permutation for $\gamma$ as well.\end{proof}   

\begin{cor}For each $\gamma$, the set $\mathcal{G}_{\gamma,\eta_{\gamma} }$ is  identifiable for  $\eta_{\gamma} \in \Omega_{\gamma}$. Each $\eta_{\gamma}$ could represent a different parameter set and $\Omega_{\gamma}$ represents its region of identifiabitily. However, in most cases, the parameters will be equal, i.e., $\eta_{\gamma} = \eta$, and $\Omega_{\gamma_1},\ldots,\Omega_{\gamma_k}$ will be equal in some common region.\end{cor}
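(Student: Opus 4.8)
The plan is to read the corollary as the sharpening of the preceding lemma from identifiability ``with respect to $\gamma$'' to full identifiability of the complete parameter $(\gamma,\eta_\gamma)$: once the index $\gamma$ has been matched across the two sides of a mixture identity, the remaining within-class parameter $\eta_\gamma$ must also match on its identifiability region $\Omega_\gamma$. I would therefore prove it by layering a within-class argument on top of the $\gamma$-matching already supplied by that lemma.

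First I would take an arbitrary equality of two finite mixtures drawn from $\mathcal{F}=\bigcup\{\mathcal{G}_{\gamma,\eta_\gamma}\}$ and invoke the preceding lemma to obtain the permutation aligning the distinct index values $\gamma_1\preceq\cdots\preceq\gamma_k$ on the left with those on the right; the total ordering on $\Theta$ is what makes this alignment canonical. Next I would use mutual disjointness --- no element of $\mathcal{G}_{\gamma_i}$ lies in the span of $\mathcal{G}_{\gamma_j}$ for $i\neq j$ --- to argue that the single mixture identity decomposes into $k$ independent block identities, one for each distinct $\gamma_i$, each involving only components from the single family $\mathcal{G}_{\gamma_i}$. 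Within the $i$th block I would then apply Definition~\ref{def:1} to $\mathcal{G}_{\gamma_i,\eta_{\gamma_i}}$, which is identifiable for $\eta_{\gamma_i}\in\Omega_{\gamma_i}$, to recover a permutation matching the within-class parameters $\eta_{\gamma_i}=\eta_{\gamma_i}'$ together with the weights. Composing the $\gamma$-permutation with these $k$ within-block permutations yields a single permutation matching $(\gamma,\eta_\gamma)$ throughout, which is exactly full identifiability.

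The main obstacle is not the combinatorial bookkeeping above but verifying the standing hypothesis that, for each fixed $\gamma$, the family $\mathcal{G}_{\gamma,\eta_\gamma}$ really is identifiable on a region $\Omega_\gamma$. In the generalized hyperbolic application $\gamma$ is the index $\lambda$, so the total-ordering requirement is automatic because $\mathbb{R}$ is totally ordered; the genuine work is to show that for fixed $\lambda$ the normal mean-variance mixture family $\mathcal{G}_p^*(\lambda,\omega,\vecmu,\mSigma,\vecbeta)$ is identifiable in $(\omega,\vecmu,\mSigma,\vecbeta)$. For the symmetric case $\vecbeta=\mathbf 0$ this reduces to the elliptical identifiability of \cite{holzmann2006}, while for the skewed case $\vecbeta\neq\mathbf 0$ I would obtain it from a transform argument in the spirit of \cite{atienza2006}, restricting to the region $\Omega_\gamma$ on which the relevant transform is injective. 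Finally I would note that here all the within-class parameters coincide, $\eta_\gamma=\eta=(\omega,\vecmu,\mSigma,\vecbeta)$, so that $\Omega_{\gamma_1},\ldots,\Omega_{\gamma_k}$ share the common identifiability region referred to in the statement and every block argument takes place there.
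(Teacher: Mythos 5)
Your reading of this corollary as a substantive sharpening of the lemma --- joint identifiability of the pair $(\gamma,\eta_\gamma)$ --- does not match its role in the paper, and this changes what needs proving. In the paper the corollary carries no proof at all: it is a clarifying restatement of the standing hypothesis of the preceding lemma, namely that each class $\mathcal{G}_{\gamma,\eta_{\gamma}}$ is \emph{assumed} identifiable on its own region $\Omega_{\gamma}$, together with the expository remark that in practice $\eta_{\gamma}=\eta$ and the regions coincide. At most it records the trivial fact that a subfamily of an identifiable (i.e., linearly independent, in the sense of Yakowitz--Spragins) family is itself identifiable. Moreover, the machinery you build in your first two paragraphs --- ordering by $\gamma$, splitting the mixture identity into blocks via disjointness, applying Definition~\ref{def:1} inside each block, and composing permutations --- is precisely the proof of the preceding lemma itself: that proof already concludes $\theta_{ij}=\theta_{ij}'$ within each inner summation and only then matches $\gamma$. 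So your ``layered'' argument is not incorrect, but it re-derives the lemma rather than proving anything the corollary newly asserts.

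The more serious problem is where you locate the ``genuine work.'' First, verifying per-class identifiability for the generalized hyperbolic family is not part of this corollary; in the paper it is the content of Theorem~\ref{thm:big_one} (univariate, later lifted to the multivariate case by Theorem~\ref{thm:little_one}). Second, your instantiation inverts the paper's: you take $\gamma=\lambda$ with within-class parameters $(\omega,\vecmu,\mSigma,\vecbeta)$, whereas the paper's proof of Theorem~\ref{thm:big_one} indexes the classes by the tail/skewness pair $(\alpha,\delta)$, keeps $(\kappa,\mu,\lambda)$ as the within-class parameters (with a total ordering on $(\kappa,\lambda)$), proves within-class identifiability by characteristic-function decay, and proves disjointness of distinct classes by density tail asymptotics. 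Your alternative route --- Holzmann's elliptical theorem for $\vecbeta=\mathbf{0}$ and ``a transform argument in the spirit of'' \cite{atienza2006} for $\vecbeta\neq\mathbf{0}$ --- is only a sketch: the skewed, fixed-$\lambda$ case is exactly where no off-the-shelf result applies, and you give no argument that Atienza's sufficient condition can actually be verified there. As a blind reconstruction, then, the proposal simultaneously over-proves (redundantly re-establishing the lemma) and under-proves (deferring the one genuinely hard step to an unverified citation, within a decomposition the paper does not use).
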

Similarly to Lemma 1 from \cite{holzmann2006}, mixture identifiability of multivariate normal mean-variance densities will depend on the identifiability of its univariate analog. With the following two theorems and a corollary, we will prove identifiability of a mixture of generalized hyperbolic distributions.

\begin{thm}\label{thm:little_one}
Recall that the stochastic relationship of the normal mean-variance mixture is given by $$\vecX =\vecmu + W\vecbeta+ \sqrt{W} \mathbf{U},$$ where $\mathbf{U} \backsim \textrm{MVN}_p(\mathbf{0}, \mSigma) $. Let $\vecX$ and $W$ have densities $f_{\theta,p}(x)$ and $f_{\gamma}(w)$, respectively, and suppose that the density $f_{\theta,p}(x)$ is valid up to dimension $q$. If finite mixtures from the univariate family $\mathcal{F}_1 = \left\{ f_{\theta,1}(x) | \theta \in \Theta_1 \right\}$  are identifiable, then finite mixtures from the family $\mathcal{F}_p = \left\{ f_{\theta,p}(x) | \theta \in \Theta_p \right\} $ are also identifiable for $1 \le p <q$.\end{thm}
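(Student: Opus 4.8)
The plan is to reduce the $p$-dimensional mixture identity to a one-dimensional one by projecting onto arbitrary directions, exactly in the spirit of Lemma~1 of \cite{holzmann2006}, and then to exploit the assumed univariate identifiability of $\mathcal{F}_1$. Suppose
\[
\sum_{g=1}^G \pi_g f_{\theta_g,p}(\vecx) = \sum_{g=1}^G \pi_g' f_{\theta_g',p}(\vecx)
\]
for all $\vecx \in \mathbb{R}^p$, where $\theta_g = (\gamma_g, \vecmu_g, \vecbeta_g, \mSigma_g)$. Fix any $\mathbf{a} \in \mathbb{R}^p$ and examine the law of $Y = \mathbf{a}'\vecX$. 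From the stochastic representation $\vecX = \vecmu + W\vecbeta + \sqrt{W}\,\vecU$ with $\vecU \sim \textrm{MVN}_p(\mathbf{0}, \mSigma)$, we obtain $Y = \mathbf{a}'\vecmu + W(\mathbf{a}'\vecbeta) + \sqrt{W}(\mathbf{a}'\vecU)$ with $\mathbf{a}'\vecU \sim \textrm{MVN}_1(0, \mathbf{a}'\mSigma\mathbf{a})$; hence $Y$ is a univariate normal mean--variance mixture driven by the \emph{same} $W$, so it lies in $\mathcal{F}_1$ with parameter $\theta_g(\mathbf{a}) = (\gamma_g, \mathbf{a}'\vecmu_g, \mathbf{a}'\vecbeta_g, \mathbf{a}'\mSigma_g\mathbf{a})$. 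The hypothesis that $f_{\theta,p}$ is valid up to dimension $q$ (with $p<q$) is precisely what guarantees that this projected density is a legitimate member of the family. Marginalizing the displayed identity over the hyperplane orthogonal to $\mathbf{a}$ therefore yields, for every $\mathbf{a}$,
\[
\sum_{g=1}^G \pi_g f_{\theta_g(\mathbf{a}),1}(y) = \sum_{g=1}^G \pi_g' f_{\theta_g'(\mathbf{a}),1}(y).
\]

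Next I would apply the assumed identifiability of finite mixtures in $\mathcal{F}_1$ to this univariate identity. For each direction $\mathbf{a}$ it produces a permutation $\sigma_{\mathbf{a}}$ with $\pi_g = \pi'_{\sigma_{\mathbf{a}}(g)}$ and $\theta_g(\mathbf{a}) = \theta'_{\sigma_{\mathbf{a}}(g)}(\mathbf{a})$; componentwise this reads $\gamma_g = \gamma'_{\sigma_{\mathbf{a}}(g)}$, $\mathbf{a}'\vecmu_g = \mathbf{a}'\vecmu'_{\sigma_{\mathbf{a}}(g)}$, $\mathbf{a}'\vecbeta_g = \mathbf{a}'\vecbeta'_{\sigma_{\mathbf{a}}(g)}$, and $\mathbf{a}'\mSigma_g\mathbf{a} = \mathbf{a}'\mSigma'_{\sigma_{\mathbf{a}}(g)}\mathbf{a}$.

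The crux, and the step I expect to be the main obstacle, is to show that $\sigma_{\mathbf{a}}$ can be taken independent of $\mathbf{a}$, so that the directional information can be assembled into equalities of the full vector and matrix parameters. Here I would first reduce to pairwise distinct components (collapsing coincident ones), and then argue that distinct components with $(\gamma_g,\vecmu_g,\vecbeta_g,\mSigma_g) \neq (\gamma_h,\vecmu_h,\vecbeta_h,\mSigma_h)$ project to distinct univariate parameters $\theta_g(\mathbf{a}) \neq \theta_h(\mathbf{a})$ for all $\mathbf{a}$ outside a lower-dimensional algebraic set, since coincidence forces $\vecmu_g=\vecmu_h$, $\vecbeta_g=\vecbeta_h$, and $\mathbf{a}'(\mSigma_g-\mSigma_h)\mathbf{a}=0$, the last defining a proper quadric unless $\mSigma_g=\mSigma_h$. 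Fixing a generic $\mathbf{a}_0$ in the complement and relabelling so that $\sigma_{\mathbf{a}_0}$ is the identity, continuity of $\mathbf{a}\mapsto\theta_g(\mathbf{a})$ together with the persistence of distinctness on a neighbourhood forces $\sigma_{\mathbf{a}}$ to remain the identity on an open set of directions.

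Finally, on that open set I would have $\mathbf{a}'\vecmu_g = \mathbf{a}'\vecmu'_g$, $\mathbf{a}'\vecbeta_g = \mathbf{a}'\vecbeta'_g$, and $\mathbf{a}'\mSigma_g\mathbf{a} = \mathbf{a}'\mSigma'_g\mathbf{a}$ for every $g$. Since each side is linear (respectively quadratic) in $\mathbf{a}$ and the identities hold on an open set, they hold for all $\mathbf{a}\in\mathbb{R}^p$, whence polarization gives $\vecmu_g=\vecmu'_g$, $\vecbeta_g=\vecbeta'_g$, and $\mSigma_g=\mSigma'_g$; combined with $\gamma_g=\gamma'_g$ and $\pi_g=\pi'_g$ this exhibits the required permutation and establishes identifiability of $\mathcal{F}_p$. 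The delicate points to watch are the legitimacy of the marginal as a family member, handled by the validity-up-to-$q$ hypothesis, and the bookkeeping when mixing weights or index parameters coincide across components, which is exactly why the generic-direction and connectedness argument is needed rather than a single projection.
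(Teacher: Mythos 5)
Your proof is correct, but it takes a genuinely different route from the paper's. The common core is projection onto a generic direction: both arguments need a direction $\vecz$ (your $\mathbf{a}_0$) along which the induced univariate parameters $(\vecgamma_g,\vecmu_g'\vecz,\Beta_g'\vecz,\vecz'\mSigma_g\vecz)$ remain pairwise distinct --- the paper imports the existence of such a direction from \cite{yakowitz1968}, while you prove it via the hyperplane/quadric genericity argument. The divergence is in how univariate identifiability is then exploited. The paper works with the linear-independence characterization of identifiability (the Yakowitz--Spragins equivalence noted after Definition~\ref{def:1}): it assumes a signed dependence relation $\sum_{g=1}^G \tau_g f_{\theta_g,p}(\vecx)=0$ with pairwise distinct $\theta_g$, passes to characteristic functions $\exp\{i\vect'\vecmu_g\}M_W(\Beta_g'\vect-\tfrac{1}{2}\vect'\mSigma_g\vect\mid\vecgamma_g)$, restricts to the single line $\vect=u\vecz$, inverts the one-dimensional Fourier transform, and concludes all $\tau_g=0$ from univariate identifiability; because the coefficients are arbitrary reals rather than mixing weights, no permutation ever enters and one direction suffices. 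You instead work directly with the permutation form of the definition, which is why you must carry a direction-dependent permutation $\sigma_{\mathbf{a}}$, prove it is locally constant near a generic $\mathbf{a}_0$ by a continuity-and-separation argument, and then recover $\vecmu_g$, $\vecbeta_g$, $\mSigma_g$ by polarization of the linear and quadratic forms over an open set of directions. Both routes are sound: yours is more self-contained (no appeal to the identifiability/linear-independence equivalence, no Fourier inversion), while the paper's is much shorter precisely because the linear-independence framing dissolves the permutation bookkeeping that forms the bulk of your work. One loose end to tighten in yours: after ``collapsing coincident components'' the two mixtures may have different numbers of components, a case the equal-$G$ permutation definition does not literally cover; stating identifiability as equality of finitely supported mixing measures (the formulation of Holzmann et al.\ quoted in Section~\ref{sec:fmi}) repairs this, and is another convenience the linear-independence route gets for free.
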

\begin{proof}
See Appendix~\ref{app:proof0}.
\end{proof}

\begin{thm}\label{thm:big_one} The univariate generalized hyperbolic distribution is identifiable.\end{thm}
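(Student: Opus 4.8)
The plan is to show that the univariate generalized hyperbolic family generates identifiable finite mixtures, which by \cite{yakowitz1968} is equivalent to linear independence of the family $\{f_{\theta,1} : \theta = (\lambda,\omega,\mu,\sigma,\beta)\}$. I would establish this with the transform method of \cite{atienza2006}, taking the moment generating function as the linear transform. From the representation $X = \mu + W\beta + \sqrt{W}\,U$ with $U\sim\mathcal{N}(0,\sigma^2)$ and $W\sim\mathcal{I}(\omega,1,\lambda)$, together with the closed-form moment generating function of the GIG law, one obtains
\begin{equation*}
M_\theta(t) = e^{\mu t}\left(\frac{\omega}{D_\theta(t)}\right)^{\lambda/2}\frac{K_\lambda\!\left(\sqrt{\omega\,D_\theta(t)}\right)}{K_\lambda(\omega)}, \qquad D_\theta(t) = \omega - 2\beta t - \sigma^2 t^2,
\end{equation*}
which converges precisely on the open interval $(t^-_\theta,t^+_\theta)$ whose endpoints are the roots of $D_\theta(t)=0$, with $t^+_\theta = (-\beta + \sqrt{\beta^2 + \sigma^2\omega})/\sigma^2 > 0$.

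The key observation is that $M_\theta$ carries a branch-point singularity at the right endpoint $t^+_\theta$, and the \emph{type} of this singularity encodes the parameters. Using the small-argument expansion of $K_\lambda$ together with $D_\theta(t)\sim 2\sqrt{\beta^2+\sigma^2\omega}\,(t^+_\theta - t)$, one finds $M_\theta(t)\sim C_\theta\,(t^+_\theta - t)^{-\lambda}$ for $\lambda>0$, with logarithmic blow-up at $\lambda=0$ and a finite-but-nonanalytic boundary value for $\lambda<0$. This furnishes the total order required by \cite{atienza2006}: I would order parameter vectors lexicographically, first by the singularity location $t^+_\theta$ (a smaller endpoint dominates, since the corresponding transform becomes singular while all others remain analytic there), then by the index $\lambda$ (which fixes the order of the branch point), and finally by the coefficients of the full boundary expansion, which recover $\mu,\beta,\sigma,$ and $\omega$. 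The three qualitatively different regimes $\lambda>0$, $\lambda=0$, and $\lambda<0$ can be treated as separate identifiable pieces and reassembled with the preceding lemmas: the pieces are mutually disjoint because their boundary behaviours are incompatible, and the lemma on totally ordered unions glues them together. Applied with $\gamma=\lambda$ under the usual order on $\mathbb{R}$, the same lemmas also deliver identifiability with respect to $\lambda$, since a branch point of order $\lambda_1$ cannot be reproduced by a finite combination of transforms whose singularities at that point all have the distinct order $\lambda_2$; identifiability at fixed $\lambda$ then reduces, by the same peeling argument, to recovering the remaining four parameters from $t^+_\theta$ and the subleading expansion.

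The main obstacle is verifying that this prescription genuinely defines a \emph{strict} total order, i.e., that no two distinct parameter vectors produce moment generating functions with asymptotically proportional singular parts. Two issues must be controlled. First, the singularity location $t^+_\theta$ depends on the parameters, so the limit point in the Atienza criterion is not fixed in advance; the argument must proceed by peeling off components at the smallest active endpoint and iterating, rather than by evaluating one fixed limit. Second, components sharing both $t^+_\theta$ and $\lambda$ have branch points of identical order, so separating them requires pushing the Bessel-function asymptotics to higher order and checking that the resulting expansions—carrying the factor $e^{\mu t}$ together with $\beta^2+\sigma^2\omega$ and $\omega$—cannot coincide up to a multiplicative constant unless the parameters agree. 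Handling the $\lambda\le 0$ cases, where $M_\theta$ stays finite at $t^+_\theta$ and the singularity surfaces only in sufficiently high derivatives, is the most delicate part, and it is here that the split into disjoint regimes does the real work. Once the strict total order is in place, \cite{atienza2006} yields linear independence and hence mixture identifiability, which, combined with Theorem~\ref{thm:little_one}, extends to the multivariate family.
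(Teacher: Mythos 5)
Your transform computations are correct (the MGF formula, the convergence interval, and the three boundary regimes $\lambda>0$, $\lambda=0$, $\lambda<0$ all check out), and your plan shares its skeleton with the paper's proof in Appendix~B.2: split the family into pieces, prove each piece identifiable with an integral transform and a total order, prove the pieces disjoint, and glue with the union lemmas. However, the two difficulties you defer are exactly where the content of the theorem lies, and the ratio-limit machinery you invoke cannot resolve them. The criterion of \cite{atienza2006} kills mixing coefficients by exhibiting, for the minimal element $f_1$ of the order, a point $t_0$ at which $\phi_{f_2}(t)/\phi_{f_1}(t)\to 0$ for \emph{every} $f_2\succ f_1$. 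Because the generalized hyperbolic MGF converges only on a bounded interval, the only available points $t_0$ are the finite endpoints $t^{\pm}_\theta$, and at a finite point every factor of the form $e^{\mu t}$, $1/K_\lambda(\omega)$, or $(\beta^2+\sigma^2\omega)^{-\lambda/2}$ is a nonzero constant. Consequently, two components sharing $t^+_\theta$ and $\lambda$ --- a three-parameter tie family, since fixing $t^+_\theta$ imposes one equation on $(\beta,\sigma,\omega)$ and leaves $\mu$ free --- have MGFs whose ratio tends to a finite \emph{nonzero} constant at $t^+_\theta$, so no ordering can make the Atienza limit zero there. Your proposed remedy, ordering by ``the coefficients of the full boundary expansion,'' is not an instance of the ratio-limit criterion; worse, checking that two expansions ``cannot coincide up to a multiplicative constant'' only yields pairwise non-proportionality, which by \cite{yakowitz1968} is strictly weaker than the linear independence required (a signed combination of three or more tied components could still cancel). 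Making that step rigorous amounts to proving identifiability of the tie family, i.e., the hard core of the theorem. The paper escapes this trap by switching to the characteristic function, whose domain is all of $\mathbb{R}$: as $t\to\infty$ the decay factor $e^{-\kappa t}$, with $\kappa=\sigma\sqrt{\omega}$, separates components with different $\kappa$, the index $\lambda$ separates equal-$\kappa$ components, and the oscillating factors $e^{it\mu}$ are disposed of by the argument of \citet[][p.~762]{holzmann2006}. That information is invisible to an MGF restricted to a bounded interval.

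The $\lambda\le 0$ regimes are also in worse shape than you indicate. Within the $\lambda<0$ piece, every component's MGF is finite and strictly positive at every branch point, so the ratio of two transforms tends to a finite nonzero constant even when the endpoints $t^+_\theta$ \emph{differ}: the ordering by singularity location fails entirely there, not merely for ties. Repairing this by differentiating the MGF to expose the $(t^+_\theta-t)^{|\lambda|}$ singularity requires a derivative of order exceeding $|\lambda|$, which is unbounded over the family, so no single linear transform serves. Moreover, gluing the three regimes with the union lemmas requires span-disjointness --- e.g., that no $\lambda>0$ density is a finite signed combination of $\lambda\le 0$ densities --- and your one-line appeal to ``incompatible boundary behaviours'' does not establish it; the natural proof uses the density tail asymptotics $f(x)\propto |x-\mu|^{\lambda-1}\exp\{-\alpha|x-\mu|+\delta(x-\mu)\}$, which is precisely the paper's Part~III argument (and note it needs \emph{both} tails, whereas $t^+_\theta$ encodes only the right one). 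In short, your proposal is a genuinely different strategy --- Laplace transform at finite branch points rather than Fourier transform at infinity --- but it has gaps at the decisive steps, and the natural repairs lead back to the paper's characteristic-function and tail-asymptotic techniques.
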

\begin{proof} See Appendix~\ref{app:proof}.\end{proof}

\begin{cor} The multivariate generalized hyperbolic distribution is identifiable.\end{cor}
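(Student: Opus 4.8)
The plan is to obtain the result as a direct consequence of Theorem~\ref{thm:little_one} and Theorem~\ref{thm:big_one}, so that no fresh calculation is needed beyond checking that the hypotheses of the lifting theorem are met.

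First I would confirm that the multivariate generalized hyperbolic distribution is a member of the normal mean-variance mixture family to which Theorem~\ref{thm:little_one} applies. This is immediate from the stochastic representation in \eqref{dist of X given w}: writing $\vecX = \vecmu + W\vecbeta + \sqrt{W}\vecU$ with $\vecU \backsim \mathrm{MVN}_p(\mathbf{0}, \mSigma)$ and $W$ following a GIG law (equivalently $W\backsim\mathcal{I}(\omega,1,\lambda)$), we see that $\vecX$ has exactly the structure required by the theorem, with mixing density $f_\gamma(w)$ given by the GIG density indexed by $\gamma=(\lambda,\omega)$ and component density $f_{\theta,p}(x)$ equal to \eqref{dist ghy}.

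Second, I would observe that this density is a valid probability density in every dimension $p\ge 1$, since the GIG mixing law places no ceiling on the admissible dimension; hence the validity threshold $q$ in Theorem~\ref{thm:little_one} may be taken arbitrarily large and the restriction $1\le p<q$ imposes no real constraint. With these two points established, the base case needed by Theorem~\ref{thm:little_one} is supplied verbatim by Theorem~\ref{thm:big_one}, which asserts that finite mixtures of univariate generalized hyperbolic densities $\mathcal{F}_1$ are identifiable. Applying Theorem~\ref{thm:little_one} then yields identifiability of finite mixtures of $p$-dimensional generalized hyperbolic densities $\mathcal{F}_p$ for all $p$, which is the assertion of the corollary.

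I do not expect a genuine computational obstacle here, as the substantive work resides in the appendices supporting the two theorems. The only point demanding care is the conceptual alignment of the hypotheses: one must ensure that ``identifiable'' in Theorem~\ref{thm:big_one} is read in the finite-mixture sense of Definition~\ref{def:1}, matching precisely the input required by Theorem~\ref{thm:little_one}, and that the univariate parameter space $\Theta_1$ arises as the appropriate restriction of $\Theta_p$ under the marginalization (or projection) argument on which the lifting theorem rests. Once these identifications are in place, the corollary follows with no further argument.
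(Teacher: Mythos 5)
Your proposal is correct and takes essentially the same route as the paper, whose entire proof is to cite Theorems~\ref{thm:little_one} and~\ref{thm:big_one}. The extra checks you make --- that the generalized hyperbolic distribution has the normal mean-variance mixture structure via \eqref{dist of X given w}, and that the dimension-validity condition $1\le p<q$ is vacuous here --- are exactly the details the paper leaves implicit.
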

\begin{proof} 
Result follows from Theorems~\ref{thm:little_one} and~\ref{thm:big_one}.\\
\end{proof}

For completeness, the well-known label switching problem should be mentioned when discussing identifiability. \cite{render84} use the term label switching in reference to the ``the invariance of the likelihood under relabelling of the mixture components" \citep{stephens00b}. As \cite{stephens00b} points out, label switching can lead to difficulties when model-based clustering is carried out within the Bayesian paradigm. Herein, we work within the maximum likelihood context, where label switching has no practical implications, arising only as a theoretical identifiability issue that can usually be resolved by specifying some ordering on the mixing proportions, e.g., $\pi_1>\pi_2>\cdots>\pi_G$. Note that in cases where mixing proportions are equal, we could consider a total ordering on other model parameters. In the analyses herein (Section~\ref{sec:data}), we do not necessarily specify such an ordering on the mixing proportions, instead preferring to label the components so as to make the associated tables as reader-friendly as possible.

Note also that, to avoid degenerate solutions arising due to the unboundedness of the likelihood, we choose the root associated with largest local maximum. Other approaches to avoid spurious solutions involve employing constraints or penalty functions. \cite{hathaway85} suggests constraining the ratio of the smallest and largest variance parameters among the components when dealing with a univariate Gaussian distribution. \cite{chen09} use a penalized maximum likelihood estimator and show that it is strongly consistent when the number of components has a known upper bound in the multivariate Gaussian case.

\section{Parameter Estimation}\label{sec:para}

Parameter estimation is carried out using an expectation-maximization (EM) algorithm \citep{dempster77}. The EM algorithm is an iterative technique that facilitates maximum likelihood estimation when data are incomplete or treated as being incomplete. In our case, the missing data comprise the group memberships and the latent variable.  We assume a clustering paradigm so that none of the group membership labels are known. Denote group memberships by $z_{ig}$, where $z_{ig}=1$ if observation $i$ is in component~$g$ and $z_{ig}=0$ otherwise. The latent variables $W_{ig}$ ($i=1,\dots, n$; $g=1,\ldots,G$) are assumed to follow GIG distributions and the complete-data log-likelihood is given by
\begin{equation*}\begin{split} \label{mixture likelihood}
&l_c(\varthet\mid \vecx, \vecw,\mathbf{z}) 
= \sum_{i=1}^n\sum_{g=1}^G z_{ig}\Bigg[\log\pi_g + \sum_{j=1}^p \log \phi\left(\vecx_i\mid\vecmu_g + w_{ig} \vecbeta_g , w_{ig} \mSigma_g \right)  + \log h(w_{ig}\mid\omega_g, \lambda_g )\Bigg]\\ 
&= C -\frac{1}{2}\sum_{i=1}^n\sum_{g=1}^Gz_{ig}\log\left|\mSigma_g^{-1}\right|+\sum_{i=1}^n\sum_{g=1}^Gz_{ig}\log h(w_{ig}\mid\omega_g, \lambda_g)\\ 
& \ -\frac{1}{2}\mbox{tr}\Bigg\{\sum_{g=1}^G \mSigma_g^{-1} \sum_{i=1}^n z_{ig}\bigg[ \frac{1}{w_{ig}}(\vecx_i-\vecmu_g)(\vecx_i-\vecmu_g)'-(\vecx_i-\vecmu_g)\vecbeta_g'-\vecbeta_g(\vecx_i-\vecmu_g)' + w_{ig}\vecbeta_g\vecbeta_g'\bigg]\Bigg\},
\end{split}\end{equation*}
where $C$ does not depend on the model parameters.

In the E-step, the expected value of the complete-data log-likelihood is computed. Because our model is from the exponential family, this is equivalent to replacing the sufficient statistics of the missing data by their expected values in $l_c(\varthet\mid \vecx, \vecw,\mathbf{z})$; here, the missing data are the latent variables and the group membership labels. 
Now, it suffices to calculate the marginal conditional distribution for group memberships given the observed data and for the latent variable in a specific component given the observed data. We require following expectations:
\begin{equation*}\begin{split}
&\mathbb{E}\left[ Z_{ig} \mid \mathbf{x}_i \right] =  \frac{ \pi_g f(\mathbf{x}_i \mid\boldsymbol{\theta}_g)}{\sum_{h=1}^G \pi_h f(\mathbf{x}_i\mid\boldsymbol{\theta}_h)} \equalscolon \hat{z}_{ig},\qquad
\mathbb{E}\left[ W_{ig} \mid \vecx_i, Z_{ig}=1 \right] = \eta_g \frac{K_{\lambda_g+1} (\omega_g)}{K_{\lambda_g} (\omega_g)} \equalscolon a_{ig},\\
&\mathbb{E}\left[{1}/{W_{ig}} \mid \vecx_i, Z_{ig}=1 \right] = \frac{1}{\eta_g} \frac{K_{\lambda_g+1} (\omega_g)}{K_{\lambda_g} (\omega_g)} - \frac{2\lambda_g}{\omega_g \eta_g} \equalscolon b_{ig},\\
&\mathbb{E}\left[ \log(W_{ig}) \mid \vecx_i, Z_{ig}=1 \right] =  \log \eta_g + \frac{1}{K_{\lambda_g}\left(\omega_g\right)}\frac{\partial}{\partial\lambda_g}K_{\lambda_g}\left( \omega_g\right)\equalscolon c_{ig}.
\end{split}\end{equation*}
Hereafter, we use the notation: $n_g= \sum_{i=1}^n\hat{z}_{ig}$, $\bar{a}_g = ({1}/{n_g})\sum_{i=1}^n\hat{z}_{ig} a_{ig}$, $\bar{b}_g = ({1}/{n_g})\sum_{i=1}^n\hat{z}_{ig} b_{ig}$, and $\bar{c}_g = ({1}/{n_g})\sum_{i=1}^n\hat{z}_{ig} c_{ig}$.

In the M-step, we maximize the expected value of the complete-data log-likelihood to get the updates for the parameter estimates. The update for the mixing proportions is $\hat{\pi}_g=n_g/n$, where $n_g=\sum_{i=1}^n\hat{z}_{ig}$. Updates for $\vecmu_g$ and $\vecbeta_g$ are given by
\begin{equation*}\begin{split}
\hat{\vecmu}_g = \frac{ \sum_{i=1}^n \hat{z}_{ig}\vecx_i( \bar{a}_g b_{ig}-1)}{\sum_{i=1}^n \hat{z}_{ig} (\bar{a}_gb_{ig}-1)} \quad\text{and}\quad
\hat{\vecbeta}_g = \frac{ \sum_{i=1}^n\hat{z}_{ig}\vecx_i(\bar{b}_{g}- b_{ig})}{\sum_{i=1}^n \hat{z}_{ig} (\bar{a}_g b_{ig}-1)},
\end{split}\end{equation*}
respectively. The update for $\matsig_g$ is given by
\begin{equation} \label{update sigma}
\hat{\matsig}_g =\frac{1}{n_g}\sum_{i=1}^n\hat{z}_{ig} b_{ig} (\vecx_i - \hat{\vecmu}_g)  (\vecx_i - \hat{\vecmu}_g)'-\hat{\vecbeta}_g\left( \bar{\vecx}_g- \hat{\vecmu}_g \right)'  - \left( \bar{\vecx}_g- \hat{\vecmu}_g \right) (\hat{\vecbeta}_g)' + \bar{a}_g \hat{\vecbeta}_g (\hat{\vecbeta}_g)',
\end{equation}
where $ \bar{\vecx}_g=({1}/{n_g})\sum_{i=1}^n z_{ig}\vecx_i$. 
To demonstrate that $\hat{\matsig}_g$ is positive-definite, first note that, from Jensen's inequality,
${1}/{\mathbb{E}\left[ W_{ig}\right]}  \le  \mathbb{E}\left[ {1}/{W_{ig}} \right]$ for all $i=1,\ldots,n$. 
It follows that ${1}/{a_{ig}}\le b_{ig}$ and so 
\begin{equation*}
\bar{a}_g=\frac{1}{n_g}\sum_{i=1}\hat{z}_{ig}a_{ig}  \ge \frac{1}{n_g}\sum_{i=1}\frac{\hat{z}_{ig}}{b_{ig}}.
\end{equation*}
Now,  replacing $\bar{a}_g$ with $({1}/{n_g})\sum_{i=1}^n({\hat{z}_{ig}}/{b_{ig}})$ in (\ref{update sigma}), we obtain 
\begin{equation*}
\mSigma^{*}_g  =  \frac{1}{n_g} \sum_{i=1}^n z_{ig} b_{ig}\left(\vecx_i - \hat{\vecmu}_g - \frac{1}{b_{ig}}\hat{\vecbeta}_g\right)\left(\vecx_i - \hat{\vecmu}_g - \frac{1}{b_{ig}}\hat{\vecbeta}_g\right)' 
\end{equation*}
and the inequality
$\hat{\mSigma}_g  \succeq \mSigma^{*}_g \succ 0$
holds, ensuring that $\hat{\mSigma}_g$ is positive-definite.


To update $\omega_g$ and $\lambda_g$, we maximize the function
\begin{equation*}
q_g(\omega_g,\lambda_g  ) = -\log K_{\lambda_g} \left(\omega_g \right) + (\lambda_g-1)\bar{c}_g - \frac{\omega_g}{2} \left( \bar{a}_g + \bar{b}_g \right)
\end{equation*}
via conditional maximization, i.e., we maximize the function with respect to one parameter while holding the other parameter fixed. \cite{baricz2010} shows that $K_\lambda \left(\omega\right) $ is strictly log-convex with respect to $\lambda$ and $\omega$. This implies that $q_g(\omega_g,\lambda_g  )$ is strictly concave. The conditional maximization updates are 
\begin{equation*}
\lambda_g^{(t+1)} 
= \ \bar{c}_g \lambda_g^{(t)}  \left[ \frac{ \partial }{ \partial \lambda }  \log K_\lambda \left(\omega_g^{ (t) }\right) \Big|_{\lambda=\lambda^{ (t) } } \right]^{-1}
\ \ \text{ and }\quad
\omega_g^{(t+1)}  
= \omega_g^{ (t) } -  \frac{ \left. \frac{\partial }{\partial \omega  }  q_g\left( \omega,\lambda_g^{(t+1)} \right) \right|_{\omega = \omega_g^{(t)} } }{ \left. \frac{\partial^2 }{\partial \omega^2  } q_g\left(\omega,\lambda_g^{(t+1)} \right) \right|_{\omega = \omega_g^{(t)}}},
\end{equation*}
respectively, and details on their derivation is given in Appendix~\ref{app:updates}. Note that derivatives of the Bessel function are calculated numerically.

The Aitken acceleration \citep{aitken26} can be used to estimate the asymptotic maximum of the log-likelihood at each iteration of an EM algorithm and thence to determine convergence. The Aitken acceleration at iteration $k$ is 
$$a^{(k)}=\frac{l^{(k+1)}-l^{(k)}}{l^{(k)}-l^{(k-1)}},$$
where $l^{(k)}$ is the log-likelihood at iteration $k$. An asymptotic estimate of the log-likelihood at iteration $k+1$ is
$$l_{\infty}^{(k+1)}=l^{(k)}+\frac{1}{1-a^{(k)}}(l^{(k+1)}-l^{(k)}),$$
and the algorithm can be considered to have converged when
$l_{\infty}^{(k)}-l^{(k)}<\epsilon$, provided that this difference is positive \citep{bohning94,lindsay95,mcnicholas10a}.
This criterion is used for the analyses in Section~\ref{sec:data}, with $\epsilon=0.01$. 

After the algorithm has converged, the predicted classifications are given by the \text{a~posteriori} probabilities (expected values) $\hat{z}_{ig}$. When reporting predicted classification results in our data analyses (Section~\ref{sec:data}), we use maximum \textit{a~posteriori} probabilities (MAP) given by 
$$\text{MAP}(\hat{z}_{ig})=
\begin{cases}
1 & \text{if } \max_h\{\hat{z}_{ih}\} \text{ occurs in component }g,\\
0 & \text{otherwise}.
\end{cases}$$

In many practical applications, the number of mixture components $G$ is unknown. In our illustrative data analyses (Section~\ref{sec:data}), $G$ is treated as unknown and the Bayesian information criterion \citep[BIC;][]{schwarz1978} is used for model selection. The BIC can be written $\text{BIC}=2l(\mathbf{x}~|~\hat{\boldsymbol{\theta}}) - \rho \log n$, where $l(\mathbf{x}~|~\hat{\boldsymbol{\theta}})$ is the maximized log-likelihood, $\hat{\boldsymbol{\theta}}$ is the maximum likelihood estimate of ${\boldsymbol{\theta}}$, and $\rho$ is the number of free parameters. The use of the BIC for mixture model selection can be motivated through Bayes factors \citep{kass95,kass95b,dasgupta98} and has become popular due to its widespread use within the Gaussian mixture modelling literature. While many alternatives have been proffered, none have yet proved superior.

\section{Data analyses}\label{sec:data}

\subsection{Overview}
The mixture of generalized hyperbolic distributions model is illustrated on simulated and real data. We consider cluster analyses, but these mixture models could equally well be applied for semi-supervised classification, discriminant analysis, or density estimation. In each of our clustering analyses, the true classes are known but treated as unknown for illustration. While this sort of synthetic clustering example may not be considered quite akin to real clustering, it is representative of what has become the norm within the model-based clustering literature. Furthermore, the real data sets that we use are selected because of their popularity as benchmark data sets within the aforesaid literature. The crabs data (Section~\ref{sec:crabs}), in particular, are notoriously difficult to cluster. 

Because we know the true classes, we can assess the performance of these mixture models in terms of classification accuracy, which we measure using the adjusted Rand index \citep[ARI;][]{rand71,hubert85}. The ARI has expected value 0 under random classification and takes the value 1 for perfect class agreement. Negative values of the ARI indicate classification that is worse than would be expected under random classification. 

For the real data analyses (Section~\ref{sec:realdata}), we use an \textit{em}EM approach for initialization \citep[cf.][]{biernacki2000}. Specifically,  for each of 100 random starting values, the EM algorithm is run for 50 iterations. The algorithm with the highest (finite) log-likelihood after 50 iterations is then iterated until convergence. The simulated data examples (Section~\ref{sec:pararecovery}) are used to illustrate parameter recovery from the true model only, and they are not difficult clustering problems; accordingly, one $k$-means start suffices for each run. Note that we suggest using an initialization approach with multiple starts, such as the \textit{em}EM approach, in any real application.

\subsection{Simulated data analyses}\label{sec:pararecovery}
Data are simulated to illustrate the effectiveness of our mixture of generalized hyperbolic distributions, with our parameter estimation approach, for modelling data from its special and limiting cases. Data are simulated from a mixture of Gaussian distributions and a mixture of skew-$t$ distributions, respectively, and true and estimated parameters are compared. 
In each case, 100 two-component data sets are simulated with $n_1=n_2=250$ and the models are fitted within the model-based clustering paradigm for $G=1,\ldots,5$. In all cases, a $G=2$ component model is selected, perfect classification results are obtained, and the parameter estimates are close to the true values (Tables~\ref{tab:simG} and~\ref{tab:simst}). Note that the results are reported in the $(\psi_g, \chi_g)$ parameterization for ease of interpretation; however, they were run with the $\omega_g$ parameterization.

It is notable that, in these limiting and special cases, the parameters in question are estimated very well by our mixture of generalized hyperbolic distributions. In the Gaussian mixture simulation (Table~\ref{tab:simG}), both components should and do have large $\chi_g$ as well as small $\psi_g$ and $\vecalpha_g$. For the skew-$t$ mixture example (Table~\ref{tab:simst}), we generated data with degrees of freedom $\chi_1=8, \chi_2=20$ and the average estimates were very close ($\hat\chi_1=8.24, \hat\chi_2=21.02$). Again, $\psi_g$ should be and is small. Note that $\lambda_g$ is not free to vary for either the Gaussian or skew $t$-distributions, i.e., it is constrained so that $\lambda_g=-\chi_g/2$. 
\begin{table}[h!]
\caption{\label{tab:simG}Mean parameter estimates from the application of our mixture of generalized hyperbolic distributions to 100 simulated data sets from a two-component mixture of Gaussian distributions.}
\centering
\begin{tabular*}{1\textwidth}{@{\extracolsep{\fill}}*{6}{r}}
\hline
 & \multicolumn{2}{c}{$g=1$} & & \multicolumn{2}{c}{$g=2$}\\
 \cline{2-3}\cline{5-6}
 & True  & Estimated &  & True & Estimated\\
  &   & (Std.\ Dev.) &  &  & (Std.\ Dev.)\\
\hline
$\vecmu_g$ & $(3.00,3.00)$ & $(2.91,2.97)$ &&$(-3.00,-3.00)$ & $(-2.40,-3.42)$\\
 &  & $(2.37, 2.51)$ &&  & $(2.26,2.29)$\\
$\vecalpha_g$ &$(0.00,0.00)$ & $(-0.03,0.16)$&& $(0.00,0.00)$ &$(-0.26,0.14)$\\
 &  & $(2.37, 2.51)$ &&  & $(2.26,2.29)$\\
$\del_g$ & $(1.51,-1.13,1.51)$ &$(1.51,-1.13,1.52)$ &&$(1.51,-1.13,1.51)$ & $(1.54,-1.18,1.56)$\\
&  & $(0.18,0.17,0.17)$ &&  & $(0.23,0.21,0.21)$\\
$\psi_g$ & $\rightarrow0$& $0.00$&  &$\rightarrow0$ &$0.00$\\
 &  & $(0.0002)$ &&  & $(0.0002)$\\
$\chi_g$ &$\rightarrow\infty$ & $70.3$ &  & $\rightarrow\infty$&$69.1$\\
 &  & $(7.1)$ &&  & $(7.75)$\\
$\lambda_g$ &$\rightarrow-\infty$ & $-119.8$ &  & $\rightarrow-\infty$&$-120.5$\\ 
 &  & $(2.33)$ &&  & $(2.15)$\\ \hline
\end{tabular*}
\end{table}

\begin{table}[h!]
\caption{\label{tab:simst}Mean parameter estimates from the application of our mixture of generalized hyperbolic distributions to 100 simulated data sets from a two-component mixture of skew-$t$ distributions.}
\centering
\begin{tabular*}{\textwidth}{@{\extracolsep{\fill}}*{6}{r}}
\hline
& \multicolumn{2}{c}{$g=1$} & & \multicolumn{2}{c}{$g=2$}\\
 \cline{2-3}\cline{5-6}
 & True  & Estimated &  & True & Estimated\\
  &   & (Std.\ Dev.) &  &  & (Std.\ Dev.)\\
\hline
$\vecmu_g$ & $(3.00,3.00)$ & $(2.95,3.04)$ &&$(-3.00,-3.00)$ & $(-2.89,-3.11)$\\
   &  & $(0.45, 0.45)$ &&  & $(0.30,0.30)$\\
$\vecalpha_g$ &$(2.00,-2.00)$ & $(2.05,-2.06)$&& $(-1.00,1.00)$ &$(-1.30,1.36)$\\
   &  & $(0.53, 0.53)$ &&  & $(0.30,0.30)$\\
$\matsig_g$ & $(1.00,-0.75,1.00)$ &$(0.99,-0.74,0.98)$ &&$(1.00,-0.75,1.00)$ & $(1.01,-0.76,1.01)$\\
   &  & $(0.19, 0.18, .17)$ &&  & $(0.11, 0.11, .11)$\\
$\psi_g$ & $\rightarrow0$& $0.00$&  &$\rightarrow0$ &$0.00$\\
   &  & $(0.001)$ &&  & $(0.001)$\\
$\chi_g$ &$8$ & $8.24$ &  &$20$ &$21.02$\\
   &  & $(2.53)$ &&  & $(1.8)$\\
$\lambda_g$ &$-4$ & $-4.12$ &  &$-10$ &$-10.51$\\
   &  & $(0.83)$ &&  & $(0.68)$\\ \hline
\end{tabular*}
\end{table}

\subsection{Real data analyses}\label{sec:realdata}
\subsubsection{Leptograpus crabs data}\label{sec:crabs}

\cite{campbell74} reported data on five biological measurements on 200 crabs from the genus \textit{leptograpus}. The data were collected in Fremantle, Western Australia and comprise 50 male and 50
female crabs for each of two species: orange and blue. The data are sourced from the {\tt MASS} library for {\sf R} \citep{r2013}, which contains data sets from \cite{venables99}. These data are used by \cite{raftery06} and \cite{andrews14} to illustrate respective variable selection techniques for model-based clustering. 

Mixtures of generalized hyperbolic distributions are fitted to these data for $G=1,\ldots,9$. The BIC chooses a $G=2$ component model and the resulting MAP classifications separate the crabs by colour ($\text{ARI}=0.50$; cf.\ Table~\ref{tab:crabsgh0}). For a Gaussian mixture model fitted over the same range of $G$, the BIC chooses a $G=2$ component model with classification performance akin to guessing ($\text{ARI}=0.03$; cf.\ Table~\ref{tab:crabsgh0}). Note that the {\tt mclust} software \citep{fraley13} for {\sf R} was used to fit this Gaussian mixture model (with unconstrained covariance).
\begin{table}[h!]
\caption{\label{tab:crabsgh0}Colour and gender versus predicted classifications for the chosen generalized hyperbolic (GH) mixture and the chosen Gaussian mixture for the crabs data.}
\centering
\begin{tabular*}{1\textwidth}{@{\extracolsep{\fill}}ll*{7}{r}}
\hline
&&\multicolumn{2}{c}{GH Mixture}&&\multicolumn{2}{c}{Gaussian Mixture}\\
\cline{3-4}\cline{6-7}
 &&  1 & 2 &  &1&2\\ \hline
\multirow{2}{*}{Blue} & Male    &50& 0& &21 &29 \\ 
& Female       &50&0& &26&24\\ 
\multirow{2}{*}{Orange} &Male   & 0 & 50 &  &24&26\\ 
 &Female & 0 & 50 & &9&41 \\ \hline
\end{tabular*}
\end{table}

Applying the famous MCLUST models \citep{fraley02a} to these data, using the {\tt mclust} software for {\sf R}, and using the same range of values for $G$, results in a $G=4$ component model that misclassifies 80 crabs ($\text{ARI}=0.31$; cf.\ Table~\ref{tab:crabsmclust}). While merging Gaussian components can sometimes improve classification performance, it will not help with either the chosen Gaussian mixture model (Table~\ref{tab:crabsgh0}) or the best MCLUST model (Table~\ref{tab:crabsmclust}) for the crabs data. That our generalized hyperbolic mixtures outperform MCLUST here is especially significant when one considers that MCLUST is sometimes considered a sort of gold standard approach for model-based clustering. The MCLUST family of models is a subset of 10 of the 14 Gaussian parsimonious clustering models (GPCMs) used by \cite{celeux95}; see Appendix~\ref{app:gpcm} for details.\begin{table}[h!]
\caption{\label{tab:crabsmclust}Colour and gender versus predicted classifications for the chosen MCLUST model for the crabs data.}
\centering
\begin{tabular*}{1\textwidth}{@{\extracolsep{\fill}}ll*{4}{r}}
\hline
 &&  1 & 2 & 3 & 4\\ \hline
\multirow{2}{*}{Blue} & Male & 19 & 0 & 31 & 0\\ 
& Female & 16 & 33 & 0 & 1\\ 
\multirow{2}{*}{Orange} &Male &25 & 0 &25 & 0\\ 
 &Female &5 &14 & 0 & 31\\\hline
\end{tabular*}
\end{table}

A reviewer also asked us to comment on the MCLUST results reported by \cite{melnykov2013}, who suggests that MCLUST does poorly on the crabs data set because of ``purely an initialization issue, since after the careful selection of starting parameter values for the EM algorithm, a 4-component mixture with unrestricted covariances is preferred according to BIC''.  Following this logic, we use the {\tt mixture} package \citep{browne13} for {\sf R}, using the \textit{em}EM approach for initialization, as suggested by \cite{biernacki2000} and used by \cite{melnykov2013}. In addition to allowing flexibility vis-\`{a}-vis starting values, the {\tt mixture} package implements all 14 GPCMs via algorithms provided by \cite{celeux95} and \cite{browne13b}.
Specifically, we use 100 random starting values for the \textit{em}EM followed by 50 iterations of EM algorithm. For comparison, we take an identical approach to fitting our generalized hyperbolic distribution.

Within this framework, if we consider only the unconstrained model (VVV; cf. Table~\ref{tab:mclust2}, Appendix~\ref{app:gpcm}), the BIC selects a four-component MCLUST model with ARI of 0.82 --- this is the same solution found by \cite{melnykov2013}. However, if all 14 covariance structures are considered, the BIC selects a seven-component EEE model with $\text{ARI}=0.53$ (Table~\ref{tab:crabsgh}).
To facilitate a comprehensive comparison, we develop 14 mixtures of generalized hyperbolic distributions such that their scale matrices correspond to the GPCM covariance matrices (cf.\ Table~\ref{tab:mclust2}, Appendix~\ref{app:gpcm}). Running all 14 models, for $G=1,\ldots,9$, the BIC selects a four-component 
model (with EEE scale matrix) and the resulting MAP classifications leave only 15 crabs misclassified (ARI = 0.82; Table~\ref{tab:crabsgh}). Accordingly, the chosen generalized hyperbolic mixture (EEE, $G=4$, $\text{ARI}=0.53$) outperforms the chosen GPCM model (EEE, $G=4$, $\text{ARI}=0.82$) within this \textit{em}EM initialization framework. 
\begin{table}[h!]
\caption{\label{tab:crabsgh}Colour and gender versus predicted classifications for the chosen generalized hyperbolic (GH) mixture (EEE scale matrix) and the chosen GPCM (EEE).}
\centering
\begin{tabular*}{1\textwidth}{@{\extracolsep{\fill}}ll*{12}{r}}
\hline
&&\multicolumn{4}{c}{GH Mixture (EEE)}&&\multicolumn{7}{c}{GPCM (EEE)}\\
\cline{3-6}\cline{8-14}
 &&  1 & 2 & 3 & 4&&1&2& 3& 4 & 5 &6 & 7\\ \hline
\multirow{2}{*}{Blue} & Male    &39&11  & 0&0 &&18 &   0  &  0   &  0   &   0  &  0   &  32 \\ 
& Female       &0&50&0 &0&&19 &  0   &  0   & 0    & 0    &  31 &  0   \\ 
\multirow{2}{*}{Orange} &Male   & 0 & 0 & 50 &0&& 0 &  28 &  0   &  22 &   0  &  0   &   0  \\ 
 &Female & 0 & 0 &  4&46&  & 0  &  0   &  24 &   5 &  21 &  0   &  0   \\ \hline
\end{tabular*}
\end{table}

In addition to outperforming the Gaussian mixture model and MCLUST, the performance of our mixture of generalized hyperbolic distributions on the crabs data compares favourably with many other analyses throughout the literature. At the request of a reviewer, we also apply the scale mixture of skew-normal distributions approach developed by \cite{cabral2012} and implemented in the {\sf R} package {\tt mixsmsn} \citep{prates2013}. This approach gave poor performance on the crabs data, returning a one-component model.

\subsubsection{Old Faithful Data}

Because of the very poor performance of the scale mixture of skew-normal distributions approach of \cite{cabral2012} on the crabs data, we conduct a second comparison with our generalized hyperbolic mixtures. This time the famous Old Faithful geyser data are used. We choose these data because they are used by \cite{prates2013} to illustrate the scale mixture of skew-normal distributions approach. 

The data comprise the waiting time between and the duration of eruptions of the Old Faithful geyser data in Yellowstone National Park. There are no true classes \textit{per se} for these data; however, there is some consensus that there are two components --- more frequent, shorter eruptions and less frequent, longer eruptions --- and that they are skewed. For both the scale mixture of skew-normal distributions and the generalized hyperbolic approaches, a $G=2$ component model is selected. The classification results are identical; however, contour plots reveal that the fit of the generalized hyperbolic mixture model is vastly superior to the fit of the scale mixture of skew-normal distributions for these data (Figure~\ref{fig:logd2}).
\begin{figure}[h!]
\vspace{-0.25in}
\centering
\hspace*{-0.2in}\includegraphics[width=0.515\textwidth,keepaspectratio=true]{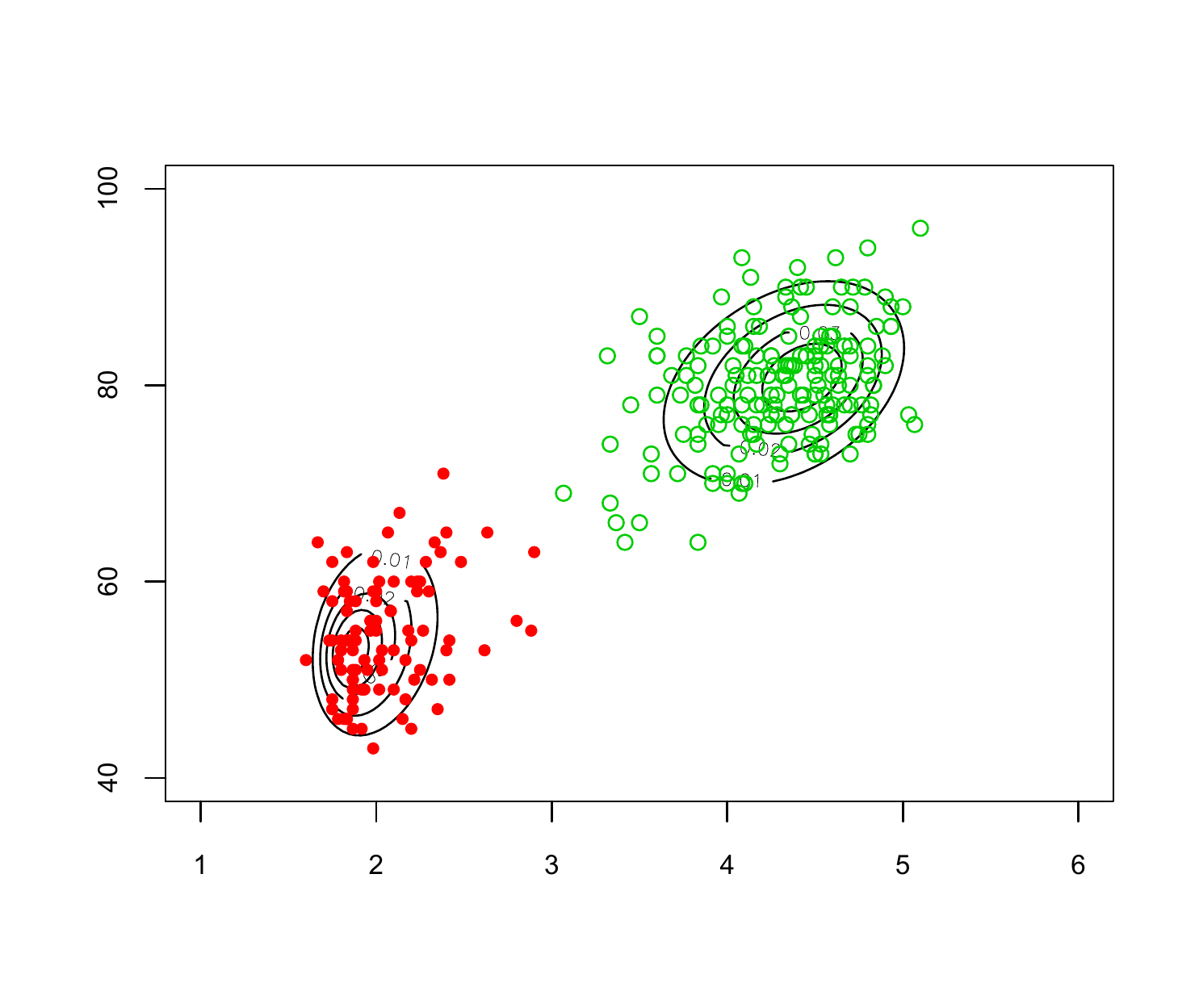}
\hspace*{-0.05in}\includegraphics[width=0.515\textwidth,keepaspectratio=true]{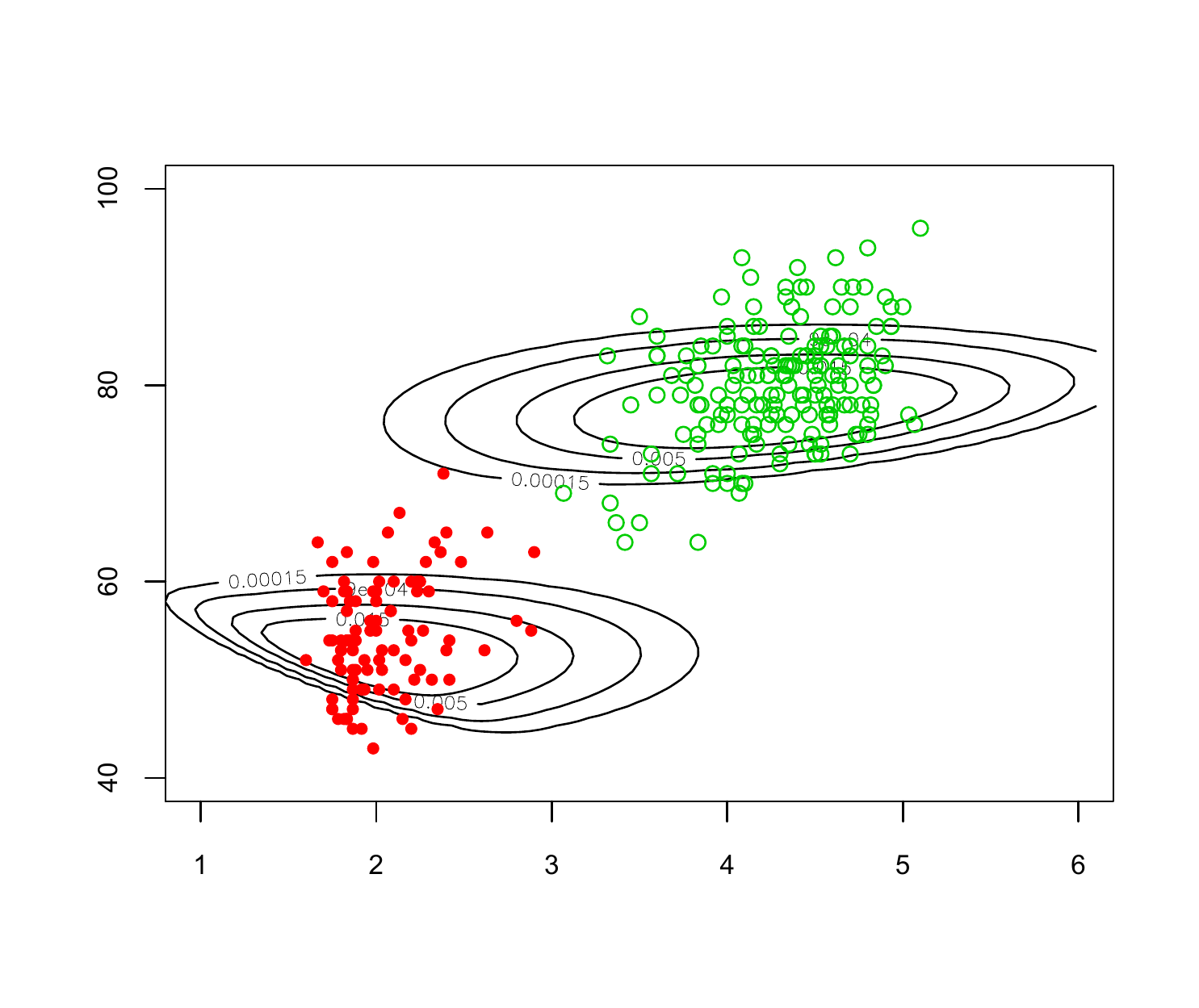}
\vspace{-0.55in}
\caption{Contour plots for the generalized hyperbolic mixtures (left) and scale mixture of skew-normal mixtures (right) for the Old Faithful data, where colour denotes predicted classification.}\label{fig:logd2}
\end{figure}

\section{Discussion}\label{sec:conc}
A mixture of generalized hyperbolic distributions has been introduced. Parameter estimation, via an EM algorithm, was enabled by exploitation of the relationship with the GIG distribution. The mixture models were illustrated via clustering applications on real data, where they performed favourably when compared to a  Gaussian mixture model, the well-established MCLUST/GPCM approach, and the scale mixture of skew-normal distributions approach developed by \cite{cabral2012}. Although illustrated for clustering, mixtures of generalized hyperbolic distributions can also be applied for semi-supervised classification, discriminant analysis, and density estimation. They represent perhaps the most flexible in a recent series of alternatives to the Gaussian mixture model for clustering and classification. What sets the mixture of generalized hyperbolic distributions apart from several other non-elliptical mixture approaches is the presence of an index parameter, and the fact that it has many common models as special or limiting cases. In a simulation, we showed that the generalized hyperbolic mixtures, with our EM algorithm, performs very well when recovering parameters for limiting and special cases.

The precise impact of our generalized hyperbolic mixtures within the wider model-based clustering literature is difficult to predict at present. Certainly, we do not suggest that one should use our mixtures of generalized hyperbolic approaches exclusively, completely ignoring more well-established approaches such as Gaussian mixtures. However, results obtained to date suggest that application of mixtures of generalized hyperbolic distributions in real cluster analyses can outperform its special cases and this should not be ignored. One downside of our generalized hyperbolic mixtures is that it will take longer to run; however, implementing the code in a lower level language and parallel implementation can be used to greatly reduce the runtime, and both are topics of ongoing work.

\appendix
\section{Updates for $\lambda_g$ and $\omega_g$}\label{app:updates}
For this section, we drop the subscript $g$ to ease the notational burden, so we have
\begin{equation*}
q(\omega,\lambda  ) = -\log K_{\lambda}\left(\omega \right) + (\lambda-1)\bar{c} - \frac{\omega}{2} \left( \bar{a} + \bar{b} \right).
\end{equation*}
\cite{baricz2010} shows that $K_\lambda \left(\omega\right) $ is strictly log-convex is with respect to $\lambda$ and $\omega$. Thus, we use conditional maximization to obtain updating equations. For $\omega$, we use Newton's method using the first and second derivative of $q$ with respect to $\omega$, i.e.,  
\begin{equation*}
\frac{ \partial }{ \partial \omega }q(\omega,\lambda) = - \frac{ K_{\lambda}' (\omega) }{ K_{\lambda} (\omega ) } - \frac{1}{2} \left( \bar{a} + \bar{b} \right) = \frac{1}{2} \left[ R_{\lambda} (\omega) + R_{-\lambda} (\omega )  - \left( \bar{a} + \bar{b} \right)  \right],
\end{equation*}
\begin{equation*}
\frac{ \partial^2 }{ \partial \omega^2 }q(\omega,\lambda) =  \frac{1}{2} \left[ R_{\lambda} (\omega)^2  - \frac{1+2\lambda}{\omega}R_{\lambda} (\omega) -1 + R_{-\lambda} (\omega)^2  - \frac{1-2\lambda}{\omega} R_{-\lambda} (\omega) -1   \right],
\end{equation*}
where $R_{\lambda} (\omega) = K_{\lambda+1} (\omega)/ K_{\lambda} (\omega)$.




To update $\lambda$, we construct a surrogate function by deriving a bound on the second derivative via the following integral representation of the modified Bessel function of the third kind \citep[][pg. 181]{watson1944}. The second derivative of $\cosh(t)$ is actually $\cosh(t)$, which is bounded below by 1. Thus, we can construct a quadratic function such that
\begin{eqnarray*}
\cosh(t) & \ge & 1 + t^2 \\
\exp\{- \omega \cosh(t)\} & \le & \exp\{-\omega (1 + t^2) \}  \qquad\quad \mbox{for} \; \omega >0 \\
\exp\{- \omega \cosh(t)\}\cosh( \lambda t) & \le & \exp\{-\omega (1 + t^2) \} \cosh( \lambda t)  \\
\int_0^\infty \exp\{- \omega \cosh(t)\}\cosh( \lambda t) \mbox{dt} & \le & \int_0^\infty \exp\{-\omega (1 + t^2) \} \cosh( \lambda t)   \mbox{dt}\\
K_{\lambda}(\omega) & \le & \sqrt{\frac{\pi}{2\omega}} \exp \left\{  \frac{ \lambda^2 - 2 \omega^2 }{2\omega}  \right\} \\
\log K_{\lambda}(\omega) & \le & \log\sqrt{\frac{\pi}{2\omega}} + \frac{\lambda^2 - 2\omega^2}{2\omega}  
\end{eqnarray*}
Because both these functions are convex in $\lambda \in \mathbb{R}$, the inequality 
\begin{equation*}
0 
\le \frac{\partial^2}{\partial \lambda^2 } \log K_{\lambda}(\omega)
\le  \frac{1}{\omega}
\end{equation*}
holds for all  $\lambda \in \mathbb{R}$. Because the second derivative is bounded and $\log K_{\lambda}(\omega)$ is an even function with respect to $\lambda$, we follow \cite{deleeuw2009} and construct the majorizing surrogate function
\begin{equation*}
g(\lambda | \lambda_0) = \log K_{\lambda_0}(\omega) +  \frac{  \lambda^2 - \lambda_0^2   }{2\lambda_0} \left( \frac{\partial}{\partial \lambda } \log K_{\lambda}(\omega) \Big|_{\lambda=\lambda_0} \right).
\end{equation*}
Applying this surrogate function to $q(\omega,\lambda  )$, we obtain the desired update for $\lambda$. 

\section{Proof of Theorems~\ref{thm:little_one} and~\ref{thm:big_one}}\label{app:proof}
%
\subsection{Proof of Theorem~\ref{thm:little_one}}\label{app:proof0}
\begin{proof}
Suppose there exists a relationship such that
\begin{equation}  \label{ nonidentify } 
\sum_{g=1}^G \tau_g f_{\theta_g,p}(\vecx ) = 0,
\end{equation}
where $\tau_g \in \mathbb{R}$ and where $\theta_g$ is pairwise distinct.  The characteristic function for a normal mean-variance density  is 
\begin{equation} 
\varphi_{X}( \vect) = 
\exp \left\{ i  \vect'\vecmu_g \right\} 
M_{W} \left( \Beta_g' \vect -\frac{1}{2} \vect' \mSigma_g \vect~\Bigg|~\vecgamma_g \right),
\end{equation}
where $M_W(w)$ is the moment generating function for $W$. Applying the Fourier transform, we obtain the characteristic function of the linear combination
\begin{equation}
\sum_{g=1}^G \tau_g \exp \left\{ i  \vect'\vecmu_g \right\} 
M_{W} \left( \Beta_g' \vect -\frac{1}{2} \vect' \mSigma_g \vect~\Bigg|~\vecgamma_g \right) = 0.
\end{equation}
\citet[][p.~211]{yakowitz1968} show that there exists $\vecz \in \mathbb{R}^p$ such that the tuples $\alpha_g^z = (\vecmu_g' \vecz, \vecz' \mSigma_g \vecz )$ are pairwise distinct. A similar argument shows that there exists $\vecz \in \mathbb{R}^p$ such that the tuples $\theta_g^z = ( \vecgamma_g, \Beta_g'\vecz, \vecmu_g' \vecz, \vecz' \mSigma_g \vecz )$ are pairwise distinct. This follows simply by setting $\alpha_{g1}^z = (\vecmu_g' \vecz, \vecz' \mSigma_g\vecz) $ and $\alpha_{g2}^z = (\Beta_g' \vecz, \vecz' \mSigma_g\vecz)$, and then letting $\theta_g^z = (\vecgamma_g, \alpha_{g1}^z, \alpha_{g2}^z)$. 
Now setting $\vect = u \vecz$, for $u\in \mathbb{R}$, we have
\begin{equation}
\sum_{g=1}^G \tau_g\exp \left\{ i  u \vecz' \vecmu_g \right\} 
M_{W} \left( u \Beta_g' \vecz -\frac{u^2}{2} \vecz' \mSigma_g \vecz~\Bigg|~\vecgamma_g \right) = 0,
\end{equation}
for $u \in \mathbb{R}$. Taking the one-dimensional inverse Fourier transform yields 
\begin{equation} 
\sum_{g=1}^G \tau_g f_{\theta_g^z, 1}(\vect ) = 0,
\end{equation}
which implies that $\tau_1 = \dots = \tau_G = 0$ by the identifiability assumption on $\mathcal{F}_1$. 
\end{proof}

\subsection{Proof of Theorem~\ref{thm:big_one}}\label{app:proof}
\begin{proof} 
This proof is laid out in three parts.

\noindent\underline{Part I}\\
First, note that if the parameterizations are one-to-one, then if one parameterization is shown to be identifiable, the others are identifiable as well. 
The density of the univariate generalized hyperbolic distribution is
\begin{equation}\label{univariate ghy}
f( x \mid\vectheta ) = 
\left[ \frac{ \omega + (x-\mu)^2/\sigma^2 }{ \omega+ \beta^2/ \sigma^{2} } \right]^{(\lambda-{1}/{2})/2}
\frac{  K_{\lambda - {p}/{2}}\Big(\sqrt{\big[ \omega+ \beta^2/\sigma^2 \big]\big[\omega +(x-\mu)^2/\sigma^2 \big]}\Big)}{ \sqrt{ 2\pi \sigma^2 }  K_{\lambda}\left( \omega\right)\exp\big\{-\left(x-\mu\right)\beta/\sigma^{2}\big\}},
\end{equation}
where $(\omega,\sigma,\mu,\beta,\lambda)\in\mathbb{R}_+^2\times\mathbb{R}^3$. After some algebraic manipulation, \eqref{univariate ghy} can be written
\begin{equation*} 
f( x \mid\vectheta ) = 
\left[ \frac{ 1 + (x-\mu)^2/ (\sqrt{\omega} \sigma)^2 }{ 1+ \beta^2/(\sqrt{\omega} \sigma)^2 } \right]^{(\lambda-{1}/{2})/2}
\frac{  K_{\lambda - {p}/{2}}\Big( \sqrt{ \big[ \omega+ \beta^2/\sigma^2 \big]/\sigma^2\big[ \omega\sigma^2 +(x-\mu)^2 \big]}\Big)}{ \sqrt{ 2\pi \sigma^2 }  K_{\lambda}\left( \omega\right)\exp\big\{-\left(x-\mu\right)\beta/\sigma^{2}\big\}}.
\end{equation*}
Let $\delta = \beta / \sigma^2$, $\alpha =  1/\sigma \times\sqrt{ \omega+ \beta^2/\sigma^2 }$ and $\kappa = \sigma \sqrt{\omega} $ , where $\alpha \ge | \delta | $. This implies that  $\sigma^2 = \kappa/\sqrt{\alpha^2-\delta^2}$, $\omega=  \kappa\sqrt{\alpha^2-\delta^2}$ and $\beta = \delta \kappa /\sqrt{\alpha^2-\delta^2}$, and another parameterization of the univariate generalized hyperbolic distribution emerges:
\begin{equation}\label{eqn:altpara}
f( x \mid\vectheta ) = 
\left[ \frac{ 1 + (x-\mu)^2/ \kappa^2 }{ 1+ \delta^2/(\alpha^2 - \delta^2 ) } \right]^{(\lambda-{1}/{2})/2}
\frac{  K_{\lambda - {p}/{2}}\Big( \alpha \sqrt{ \big[ \kappa^2  +(x-\mu)^2 \big]}\Big)}{ \sqrt{ 2\pi \sigma^2 }  K_{\lambda}\left( \kappa \sqrt{\alpha-\delta^2} \right)\exp\big\{-\left(x-\mu\right)\delta\big\}}.
\end{equation}

For large $z$, the Bessel function can approximated by  
\begin{equation*} 
K_{\lambda} (z) = \sqrt{ \frac{ \pi }{2 z} } e^{-z} \left[ 1+ O\left(\frac{1}{z}\right)\right],
\end{equation*} 
which yields 
\begin{equation*} \label{univariate ghy large x}
f( x \mid\vectheta ) \propto  
\left[ \omega + \frac{(x-\mu)^2}{\sigma^2} \right]^{\lambda/2-{3}/{4}}
\exp\left\{ - \sqrt{ \omega+ \beta^2/\sigma^2 } \frac{|x-\mu|}{\sigma}
 + \left(x-\mu\right)\beta/\sigma^{2} 
\right\}
\end{equation*}
or, if we use the alternative parameterization,
\begin{equation} \label{ density large x } 
f( x \mid\vectheta ) \propto  
\left[ 1 + \frac{(x-\mu)^2}{\kappa^2} \right]^{\lambda/2}\exp\left\{ - \alpha |x-\mu|+\delta \left(x-\mu\right)
\right\}.
\end{equation} 
The characteristic function for a normal mean-variance density  is 
\begin{equation*}
\varphi_{X}( t ) = \exp\{i t'\mu\} M_{W} \left( \beta t i -\frac{1}{2} \sigma^2 t^2 \; \big| \; \lambda, \omega \right),
\end{equation*}
and the moment generating function is
\begin{equation*}
M_{W} \left( u \right) 
 = \left[ \frac{\omega}{\omega -2u }  \right]^{ \frac{\lambda}{2} }
 \frac{ K_{\lambda} \left( \sqrt{ \omega(\omega-2u) } \right) }{   K_{\lambda} \left( \omega \right) }
  = \left[ 1 -2 \frac{u}{ \omega}  \right]^{- \frac{\lambda}{2} }
 \frac{ K_{\lambda} \left( \sqrt{ \omega(\omega-2u) } \right) }{   K_{\lambda} \left( \omega \right) }.
\end{equation*}
Now,
\begin{equation*}
\varphi_{X}( t ) = 
\exp\{ i  t'\mu\} 
\left[ 1 + \frac{ \sigma^2 t^2 -2 \beta t i  }{\omega}   \right]^{ -\frac{\lambda}{2} }
\frac{  K_{\lambda} \left( \sqrt{ \omega \left[\omega + ( \sigma^2 t^2 - 2\beta t i  )  \right] } \right)  }{  K_{\lambda} \left(  \omega \right) }.
\end{equation*}
Using the alternative parameterization in \eqref{eqn:altpara},
\begin{equation*}
\varphi_{X}( t ) = 
\exp\{ i  t'\mu \} 
\left[ 1 + \frac{ t^2 - 2 \delta t i }{\alpha^2 - \delta^2} \right]^{ -\frac{\lambda}{2} }
\frac{  K_{\lambda} \left( \sqrt{  \kappa^2 \left[  t^2 - 2 \delta t i + \alpha^2 - \delta^2   \right] } \right)  }{  K_{\lambda} \left(  \kappa \sqrt{\alpha^2 - \delta^2} \right)}.
\end{equation*}
For large $t$, the characteristic function is 
\begin{equation} 
\varphi_{X}( t ) \propto
\exp\{ i  t'\mu\} 
\left[ 1 + \frac{ t^2 - 2 \delta t i }{\alpha^2 - \delta^2}   \right]^{ -\frac{\lambda}{2} }
\exp\{-[ t \kappa +   O( 1 )]\}.
\end{equation}

\noindent\underline{Part II}\\
Let $\mathcal{G}_{(\alpha, \delta)}$ be the set of hyperbolic distributions with parameters $(\alpha, \delta)$ as well as $(\kappa, \mu, \lambda) \in (0,\infty) \times \mathbb{R}^2 $. Using the characteristic function, we will show that $\mathcal{G}_{(\alpha, \delta)}$ is identifiable. To begin, set $\mu=0$ and define $\gamma = (\kappa, \lambda)$ to get the characteristic function 
\begin{equation*}
\varphi_{\gamma}( t ) = 
\left[ 1 + \frac{ t^2 - 2 \delta t i }{\alpha^2 - \delta^2} \right]^{ -\frac{\lambda}{2} }
\frac{  K_{\lambda} \left( \sqrt{  \kappa^2 \left[  t^2 - 2 \delta t i + \alpha^2 - \delta^2   \right] } \right)  }{  K_{\lambda} \left(  \kappa \sqrt{\alpha^2 - \delta^2} \right) }.
\end{equation*}
For large $t$,
\begin{equation} 
\varphi_{\gamma}( t ) \propto
\left[ 1 + \frac{ t^2 - 2 \delta t i }{\alpha^2 - \delta^2}   \right]^{ -\frac{\lambda}{2} }
\exp\left\{  - \left[ t \kappa +   O( 1 ) \right]  \right\}.
\end{equation}
Now, define a total ordering on $\gamma$ as follows: $\gamma_1  \preceq \gamma_2$ if $ \kappa_2  > \kappa_1 $ or  if $ \kappa_2  =  \kappa_1 $ and $ \lambda_2  > \lambda_1 $. It follows that
\begin{equation}  \label{characteristic fn gamma} 
\lim_{t \rightarrow \infty} \frac{ \varphi_{\gamma_2} (t) } {  \varphi_{\gamma_1}(t) } =0.
\end{equation}

Now, with this ground work having been laid down, the proof follows similarly to Theorem~1 in \cite{holzmann2006}.  Starting with a dependence relation like in \eqref{ nonidentify }, but assuming that $ f  \in  \mathcal{G}_{(\alpha, \delta)}$ and applying the Fourier transform, we obtain
\begin{equation} 
\sum_{g=1}^G  \tau_g \exp\{ i  t \mu_g \}   \varphi_{\gamma_g} ( x ) = 0.
\end{equation}
Without loss of generality, assume that $\gamma_1 \preceq \dots \preceq \gamma_G$ and let $m \ge 1$ such that $ \gamma_1 = \dots = \gamma_m \preceq \gamma_{m+1}  \preceq \dots \preceq \gamma_{G}$. Dividing by $\exp\{ i  t \mu_1 \} \varphi_{\gamma_1}(t)$ we get
\begin{equation}\label{eqn:26}
\tau_1 
+ \sum_{g=2}^m \tau_g \exp\{ i  t \left( \mu_g -\mu_1 \right) \} + \sum_{g=m+1}^G \tau_g \exp\{ i  t \left( \mu_g -\mu_1 \right) \}  \frac{ \varphi_{\gamma_g} ( t ) }{   \varphi_{\gamma_1} ( t )} = 0.
\end{equation}
From $\eqref{characteristic fn gamma}$, the third term of $\eqref{eqn:26}$ $\rightarrow 0$ as $t\rightarrow \infty$ and, from \citet[][p.~762]{holzmann2006}, the second term $\rightarrow 0$ as $t\rightarrow \infty$. Therefore, $\tau_1 =0$. This is the beginning of a straightforward inductive argument.\\

\noindent\underline{Part III}\\
We have shown that $\mathcal{G}_{(\alpha, \delta)}$ is identifiable. It remains to show that  $\mathcal{G}_{(\alpha_1, \delta_1)} \cap \mathcal{G}_{(\alpha_2, \delta_2)} = \varnothing$. Assume that an element of $\mathcal{G}_{(\alpha_2, \delta_2)}$ can be expressed by a linear combination of elements from $\mathcal{G}_{(\alpha_1, \delta_1)}$, i.e., that there exist some $\tau_g \in \mathbb{R}$, $g=1,\ldots,G$, such that
\begin{equation} \label{span}
f(x \mid \mu_2, \lambda_2, \kappa_2, \alpha_2, \delta_2)  = \sum_{g=1}^G \tau_g f( x \mid \mu_{1g}, \lambda_{1g}, \kappa_{1g}, \alpha_1, \delta_1).
\end{equation} 
Dividing by the term on the left 
and defining 
\begin{equation}
 c_{g2} =  \frac{ f( x \mid \mu_{1g}, \lambda_{1g}, \kappa_{1g}, \alpha_1, \delta_1  ) }{ f( x \mid \mu_2, \lambda_2, \kappa_2, \alpha_2, \delta_2 )}, 
\end{equation} 
we have that $ 1= \sum_{i=1} \tau_g  c_{g2}$. Using equation (\ref{ density large x }) for large $x$, we have that as $x \rightarrow \infty$ and if  $  (\alpha_1 -\delta_1 ) > (\alpha_2 -\delta_2 )$  then  $c_{g2} \rightarrow 0$, and if  $  (\alpha_1 -\delta_1 ) < (\alpha_2 -\delta_2 )$  then $c_{g2} \rightarrow \infty $, yielding a contraction. Therefore, the intersection is empty. However, if $  (\alpha_1 -\delta_1 ) = (\alpha_2 -\delta_2 )$ then there may exist a set of $\tau_g$'s such that equation (\ref{span}) holds.

We again use \eqref{ density large x } for large $x$ but as $x \rightarrow -\infty$ and if  $  (\alpha_1 + \delta_1 ) > (\alpha_2 + \delta_2 )$  then  $c_{g2} \rightarrow 0$, and if  $  (\alpha_1 +\delta_1 ) < (\alpha_2 + \delta_2 )$  then $c_{g2} \rightarrow \infty $, yielding a contraction; therefore, the intersection is empty. But if $(\alpha_1 +\delta_1 ) = (\alpha_2 + \delta_2 )$ there may exist a set of $\tau_g$'s such that equation (\ref{span}) holds.


Combining these two arguments to have a set of $\tau_g$'s such that \eqref{span} holds we require $(\alpha_1 +\delta_1 ) = (\alpha_2 + \delta_2 )$ and $(\alpha_1 - \delta_1 ) = (\alpha_2 - \delta_2 )$, which means the two sets are equal. If either condition does not hold, we have that $\mathcal{G}_{(\alpha_1, \delta_1)} \cap \mathcal{G}_{(\alpha_2, \delta_2)} = \varnothing$.

Accordingly, the univariate generalized hyperbolic distribution is identifiable because it is formed from the union of identifiable and disjoint sets $\mathcal{G}_{(\alpha, \delta)}$ with total ordering on $(\alpha, \delta)$, i.e., $(\alpha_1, \delta_1)  \preceq (\alpha_2, \delta_2)$ if $ (\alpha_1 -\delta_1 ) < (\alpha_2 -\delta_2 )$ and $(\alpha_1 + \delta_1 ) > (\alpha_2 + \delta_2 )$, and if $(\alpha_1 -\delta_1 ) = (\alpha_2 -\delta_2 )$ then $(\alpha_1, \delta_1 ) \preceq (\alpha_2, \delta_2)$ if $(\alpha_1 + \delta_1 ) > (\alpha_2 + \delta_2)$. 
\end{proof}

\section{The GPCM Family}\label{app:gpcm}
Each member of the GPCM family is a Gaussian mixture model with eigen-decomposed component covariances, i.e., $\matsig_g=\lambda_g\matd_g\mata_g\matd_g'$, where $\matd_g$ is the matrix of eigenvectors, $\mata_g$ is a diagonal matrix with entries proportional to the eigenvalues, and $\lambda_g$ is the associated constant of proportionality \citep{banfield93}. Imposing constraints on the constituent elements of this decomposition leads to a family of 14 models (Table~\ref{tab:mclust2}) known as the GPCM family \citep{celeux95}. The MCLUST family is a subset of 10 of these models, and the models marked with an asterisk in Table~\ref{tab:mclust2} are not part of the MCLUST family.
\begin{table*}[!h]
\caption{\label{tab:mclust2}Nomenclature, covariance structure, and number of free covariance parameters for each member of the GPCM family. Models marked with an asterisk are not part of the MCLUST family.} 
\begin{tabular*}{1.0\textwidth}{@{\extracolsep{\fill}}lllllr}
\hline
Model &  Volume & Shape & Orientation & $\matsig_g$ & Free covariance parameters\\
\hline
EII & Equal & Spherical & --& $\lambda\ident$ & 1\\
VII & Variable & Spherical & -- & $\lambda_g\ident$ & $G$\\
EEI & Equal & Equal & Axis-Aligned & $\lambda\mata$ & $p$\\
VEI & Variable & Equal & Axis-Aligned & $\lambda_g\mata$ & $p+G-1$\\
EVI & Equal & Variable & Axis-Aligned & $\lambda\mata_g$ & $pG-G+1$\\
VVI & Variable & Variable & Axis-Aligned & $\lambda_g\mata_g$ & $pG$\\
EEE & Equal & Equal & Equal & $\lambda\matd\mata\matd'$ & $p(p+1)/2$\\
EEV & Equal & Equal & Variable & $\lambda\matd_g\mata\matd_g'$ & $Gp(p+1)/2 - (G-1)p$\\
VEV & Variable & Equal & Variable & $\lambda_g\matd_g\mata\matd_g'$ & $Gp(p+1)/2 - (G-1)(p-1)$\\
VVV & Variable & Variable & Variable & $\lambda_g\matd_g\mata_g\matd_g'$ & $Gp(p+1)/2$\\
EVE$^*$ & Equal & Variable & Equal & $\lambda\matd\mata_g\matd'$ & $p(p+1)/2 +(G-1)(p-1)$\\
VVE$^*$ & Variable & Variable & Equal & $\lambda_g\matd\mata_g\matd'$ & $p(p+1)/2 + (G-1)p$\\
VEE$^*$ & Variable & Equal & Equal & $\lambda_g\matd\mata\matd'$ & $p(p+1)/2 + (G-1)$\\
EVV$^*$ & Equal & Variable & Variable & $\lambda\matd_g\mata_g\matd_g'$ & $Gp(p+1)/2-(G-1)$\\
\hline
\end{tabular*}
\end{table*}


\end{document}